\newcommand{\mytitle}{Pointer Quantum PCPs and Multi-Prover Games}
\newcommand{\Alex}{Alex B. Grilo}
\newcommand{\Iordanis}{Iordanis Kerenidis}
\newcommand{\Attila}{Attila Pereszl\'{e}nyi}
\theoremstyle{plain}
\newtheorem{theorem}{Theorem}[section]
\newtheorem{lemma}[theorem]{Lemma}
\newtheorem{conjecture}[theorem]{Conjecture}
\newtheorem{definition}[theorem]{Definition}
\theoremstyle{definition}
\crefname{algorithm}{Protocol}{Protocols}
\crefname{conjecture}{Conjecture}{Conjectures}
  \newcommand{\comment}[1]{}
  \newcommand{\class}[1]{\ensuremath{\mathsf{#1}}}
  \newcommand{\half}{\ensuremath{\frac{1}{2}}}
  \newcommand{\calS}{\mathcal{S}}
  \newcommand{\calH}{\mathcal{H}}
  \newcommand{\Zplus}{\ensuremath{\mathbb{Z}^+}}
  \newcommand{\kb}[1]{\ketbra{#1}}
  \newcommand{\cprob}[2]{\prob{#1 \middle\vert #2}}
  \newcommand{\ayes}{A_{\rm yes}}
  \newcommand{\ano}{A_{\rm no}}
  \newcommand{\complex}{\mathbb{C}}
  \newcommand{\real}{\mathbb{R}}
  \newif\ifcomments
    \newcommand{\knote}[1]{\textcolor{cyan}{ {\textbf{(Iordanis: }#1\textbf{) }}}}
    \newcommand{\gnote}[1]{\textcolor{blue}{ {\textbf{(Alex: }#1\textbf{) }}}}
    \newcommand{\gfootnote}[1]{\textcolor{blue}{\footnote{\textcolor{blue}{\textbf{Alex: }#1}}}}
    \newcommand{\pnote}[1]{\textcolor{red}{ {\textbf{(Attila: }#1\textbf{) }}}}
    \newcommand{\pfootnote}[1]{\textcolor{red}	{\footnote{\textcolor{red}{\textbf{Attila: }#1}}}}
    \newcommand{\knote}[1]{}
    \newcommand{\gnote}[1]{}
    \newcommand{\gfootnote}[1]{}
    \newcommand{\pnote}[1]{}
    \newcommand{\pfootnote}[1]{}
  \newif\ifhighlights
  \newcommand{\NP}{\class{NP}}
  \newcommand{\QMA}{\class{QMA}}
  \newcommand {\LH} [3] {\fn{\text{\textnormal{\textsc{LocalHam}}}}{#1, #2, #3}}
  \newcommand {\SLH} [3] {\fn{\text{\textnormal{\textsc{SLH}}}}{#1, #2, #3}}
  \newcommand {\QPCP} [3] {\fn{\class{QPCP}}{#1, #2, #3}}
  \newcommand {\PQPCP} [3] {\fn{\class{PointerQPCP}}{#1, #2, #3}}
  \newcommand {\CRESP} [3] {\fn{\text{\textnormal{CRESP}}}{#1, #2, #3}}
\newcommand {\br} [1] {\ensuremath{ \left( #1 \right) }}
\newcommand {\Br} [1] {\ensuremath{ \left[ #1 \right] }}
\newcommand {\cbr} [1] {\ensuremath{ \left\lbrace #1 \right\rbrace }}
\newcommand {\minusspace} {\: \! \!}
\newcommand {\fn} [2] {\ensuremath{ #1 \minusspace \br{ #2 } }}
\newcommand {\Fn} [2] {\ensuremath{ #1 \minusspace \Br{ #2 } }}
\newcommand {\eqdef} {\ensuremath{ \stackrel{\mathrm{def}}{=} }}
\newcommand {\prob} [1] {\Fn{\Pr}{#1}}
\newcommand {\abs} [1] {\ensuremath{ \left| #1 \right| }}
\newcommand {\norm} [1] {\ensuremath{ \left\| #1 \right\| }}
\newcommand {\footnoteEmail} [1] {\thanks{\href{mailto:#1}{\texttt{#1}}}}
\newcommand {\bigo} [1] {\fn{O}{#1}}
\newcommand {\Epsilon} {\mathcal{E}}
\newcommand {\bigomega} [1] {\fn{\Omega}{#1}}
\newcommand {\bra} [1] {\ensuremath{ \left\langle #1 \right| }}
\newcommand {\ket} [1] {\ensuremath{ \left| #1 \right\rangle }}
\newcommand {\ketbratwo} [2] {\ensuremath{ \left| #1 \middle\rangle \middle\langle #2 \right| }}
\newcommand {\ketbra} [1] {\ketbratwo{#1}{#1}}
\newcommand {\ceil} [1] {\ensuremath{ \left\lceil #1 \right\rceil }}
\newcommand {\Tr} {\ensuremath{ \mathrm{Tr} }}
\newcommand {\id} {\ensuremath{\mathds{1}}}
\newcommand {\myqedhere} {\tag*{\qedhere}}
\newcommand {\trace} [1] {\fn{\Tr}{#1}}
\newcommand {\poly} [1] {\fn{\text{\textnormal{poly}}}{#1}}
\newcommand {\Hset} [1] {\ensuremath{\mathbf{#1}}}
\newenvironment{proofof}[1]
	{\begin{proof}[Proof of \cref{#1}]}
	{\end{proof}}
\newenvironment{proofsketch}
	{\begin{proof}[Proof Sketch]}
	{\end{proof}}
\tikzset{
  table/.style={
  matrix of nodes,
  row sep=-\pgflinewidth,
  column sep=-\pgflinewidth,
  nodes={rectangle,draw=black,text width=1.25ex,align=center},
  text depth=0.25ex,
  text height=1ex,
  nodes in empty cells
  },
  texto/.style={font=\footnotesize\sffamily},
  title/.style={font=\small\sffamily}
}
\newcommand\CellText[2]{%
  \node[texto,left=-0.5ex of mat#1,anchor=east]
  at (mat#1.west)
  {\textnormal{#2}};
}
\newcommand\SlText[2]{%
  \node[texto,above=-0.4ex of mat#1,anchor=west,rotate=90]
  at (mat#1.north)
  {\textnormal{#2}};
}
\newcommand\SlTextNormal[2]{%
  \node[texto,above=2.5ex of mat#1,anchor=north]
  at (mat#1.north)
  {\textnormal{#2}};
}
  \author[1]{\Alex\footnoteEmail{abgrilo@gmail.com}}
  \author[1,2]{\Iordanis\footnoteEmail{jkeren@gmail.com}}
  \author{\Attila\footnoteEmail{attila.pereszlenyi@gmail.com}}
  \affil[1]{IRIF, CNRS, Universit\'{e} Paris Diderot, Paris, France}
  \affil[2]{Centre for Quantum Technologies, National University of Singapore,
  Singapore}
  \date{\formatdate{1}{3}{2016}}
  \title{\textbf{\mytitle}}
\begin{document}
\maketitle

\begin{abstract}
  The quantum PCP (QPCP) conjecture states that all problems in \QMA{}, the quantum analogue of \NP, admit quantum
  verifiers that only act on a constant number of qubits of a polynomial size quantum proof and have a constant gap between completeness and soundness. Despite an impressive body of work trying to prove or disprove the quantum PCP conjecture, it still remains widely open. The above-mentioned  proof verification statement has also been shown equivalent to the \QMA{}-completeness of the Local Hamiltonian problem with constant relative gap. Nevertheless, unlike in the classical case, no equivalent formulation in the language of multi-prover games is known.

  In this work, we propose a new type of quantum proof systems,
  the Pointer QPCP, where a verifier first accesses a classical proof that he can use as a pointer to which qubits from the quantum part of the proof to access.
  We define the Pointer QPCP conjecture, that states that all problems in \QMA{} admit quantum verifiers that first access a logarithmic number of bits from the classical part of a polynomial size proof, then act on a constant number of qubits from the quantum part of the proof, and have a constant gap between completeness and soundness.
  We define a new \QMA{}-complete problem, the Set Local Hamiltonian problem,
  and a new restricted class of quantum multi-prover games, called CRESP games.
  We use them to provide two other equivalent statements to the Pointer QPCP conjecture: the Set Local Hamiltonian problem with constant relative gap is \QMA{}-complete; and the
  approximation of the maximum acceptance probability of CRESP games up
  to a constant additive factor is as hard as
  \QMA{}. This is the first equivalence between a quantum PCP statement and the inapproximability of quantum multi-prover games.
\end{abstract}


\section{Introduction}

The celebrated PCP theorem states that all languages in \NP{} can be verified probabilistically by randomized verifiers
  that only check a constant number of bits of a polynomial size proof \cite{AroraLMSS98,AroraS98,Dinur07}.
  This theorem has far-reaching applications in complexity theory and especially
  in the inapproximability of certain optimization
  problems. This is because the PCP theorem can be recast in the following equivalent way: the
  approximation of MAX-SAT up to some constant additive factor is \NP{}-complete.
  Let us also remark that the classical PCP theorem has a third very interesting
equivalent formulation as approximation of the maximum acceptance probability of
some polynomial size multi-prover interactive games \cite{Raz98}. This game
formulation was fundamental in order to achieve better constants for the
inapproximability results of a number of \NP{}-hard problems.

One of the main questions in quantum complexity theory is whether one can prove an analogous statement for the class \QMA{}, the quantum analogue of \NP{}. The QPCP conjecture \cite{AharonovAV13} has received a lot of attention due to its importance to both physics and theoretical computer science and an impressive body of work has
provided either positive evidence
\cite{AharonovALV09,EldarH15,NatarajanV15} or
negative \cite{Arad11,AharonovE15,BrandaoH13a}. There are many different
ingredients that go into the proof of the classical PCP theorem, especially
since there are two different ways of proving it, one through the proof system
formulation and another more combinatorial way by looking directly at the
inapproximability of constraint satisfaction problems.
In the quantum setting, the positive and negative
evidence has been mostly that certain techniques that had been used in the
classical setting are applicable or not in the quantum setting.
We note that
the lack of a way of seeing the quantum PCP conjecture in a game context
also prevents us from using some
important techniques that are present in the classical case,
such as the parallel repetition theorem. Overall,
proving the quantum PCP theorem remains a daunting task.

The QPCP conjecture can be cast as a type of proof system which we denote by \QPCP{q}{\alpha}{\beta}.
Here a quantum verifier tosses a logarithmic number of classical coins and, based on the coin outcomes, decides on which $q$ qubits from the polynomial-size quantum proof
to perform a measurement. The measurement output decides on acceptance or rejection.
A yes instance is accepted with probability at least $\alpha$ and a no instance
is accepted with probability at most $\beta$,
for some $\alpha > \beta$ \cite{AharonovALV09,AharonovAV13}.
The formal conjecture is stated below.

\begin{conjecture}[QPCP Conjecture - Proof verification version]
	\label{conj:QPCP verification}
	$\QMA = \QPCP{q}{\alpha}{\beta}$ where
	$q = \bigo{1}$ and $\alpha - \beta = \bigomega{1}$.
\end{conjecture}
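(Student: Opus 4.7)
The statement is the QPCP conjecture itself, not a theorem, so any honest ``proof plan'' must admit up front that only one direction is known and sketch a research strategy for the other. The containment $\QPCP{q}{\alpha}{\beta} \subseteq \QMA$ for $q = \bigo{1}$ and $\alpha - \beta = \bigomega{1}$ is the easy direction: a Pointer/PCP-style verifier that uses $\bigo{\log n}$ random bits and reads $\bigo{1}$ qubits of a polynomial-size quantum proof is in particular a polynomial-time quantum verifier with one-sided polynomial-size proof, and the constant gap can be amplified to the canonical \QMA{} thresholds by the Marriott--Watrous in-place amplification. So I would dispose of this inclusion in a sentence.

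The hard direction $\QMA \subseteq \QPCP{q}{\alpha}{\beta}$ is where the conjecture lives. My plan would be to start from the known $\QMA$-completeness of $\LH{k}{a}{b}$ with inverse-polynomial relative gap $(b-a)/\norm{H} = 1/\poly{n}$ (Kitaev), which corresponds to a $\QPCP{}$-style verifier picking a random local term and measuring its $k$ qubits, and to try to amplify this gap to a constant. The two natural routes mirror the classical proofs: a Dinur-style combinatorial gap amplification applied to the interaction (hyper)graph of the Hamiltonian, iterating \emph{powering} over an expander together with an ``assignment tester'' that converts approximate quantum assignments into local ones; or an Arora--Safra-style proof composition, combining a long quantum proof with constant query complexity with a short inner verifier to get polynomial length and constant queries simultaneously. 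In either route, one would insist on keeping the number of addressed qubits $q$ constant and the gap $\bigomega{1}$ throughout each amplification step.

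The main obstacle, and the reason the conjecture remains open, is that the classical amplification machinery does not lift to non-commuting local constraints. Expander-based powering of a Hamiltonian adds terms whose supports overlap and whose operators do not commute, so the effective ``constraint satisfaction'' picture breaks down; entanglement in the witness can spread violations across many simultaneous terms in ways that classical union-bound arguments do not control. Likewise, a quantum assignment tester would need a low-degree / linearity test that is robust against entangled provers, and no such test with the right parameters is known. A further obstruction, emphasized by the introduction, is the absence of a multi-prover game formulation of $\QPCP$, which blocks quantum parallel repetition as an amplification tool.

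Concretely, then, my ``proof'' would be a conditional reduction: I would try to design a quantum gap-amplification step that increases the relative gap by a constant multiplicative factor while increasing locality only by a constant, and iterate $\bigo{\log n}$ times starting from Kitaev's construction. I would expect the bottleneck to be the analysis of a quantum analogue of Dinur's ``plausibility'' argument for the assignment tester in the presence of non-commuting terms; circumventing this is precisely the motivation for the relaxed Pointer QPCP model introduced later in the paper, which is the concrete contribution one should actually hope to prove.
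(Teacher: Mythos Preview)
You have correctly identified the essential point: this statement is a \emph{conjecture}, and the paper does not prove it. There is no ``paper's own proof'' to compare against; the paper simply states \cref{conj:QPCP verification} as the standard open QPCP conjecture and moves on. Your meta-level observation that only the containment $\QPCP{q}{\alpha}{\beta} \subseteq \QMA$ is known, and that the reverse inclusion is the open content of the conjecture, is exactly right and matches the paper's treatment (cf.\ \cref{rem:QMA and QPCP}, which establishes the equality only for inverse-polynomial gap).

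Your discussion of possible attack routes (Dinur-style gap amplification, Arora--Safra-style composition) and the known obstructions (non-commuting constraints, absence of a quantum assignment tester, lack of a game formulation enabling parallel repetition) is a fair summary of the landscape and is consistent with the paper's introduction. The one place where you overreach slightly is in invoking Marriott--Watrous amplification for the easy direction: it is not needed, since a constant gap already exceeds the $1/\poly{n}$ threshold for which $\QMA$ verification is standard, and the paper's \cref{rem:QMA and QPCP} simply says the $\QMA$ verifier reads the whole proof. But this is a cosmetic point, not an error.
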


In quantum mechanics, the evolution of quantum systems are described by Hermitian
operators called Hamiltonians.
In nature, particles that are far apart tend
not to interact so the global Hamiltonian can usually be described as a
sum of local Hamiltonians.
The Local Hamiltonian problem, denoted by \LH{k}{a}{b}, receives as input $m$
Hamiltonians $H_1,\ldots,H_m$ where each one describes the evolution of at
most $k$ qubits and the question is if there is a global state such that its energy is
at most $am$ or all states have energy at least $bm$ for $b>a$.
The area studying the above problem is called quantum Hamiltonian
complexity \cite{Osborne12, GharibianHLS15}, a topic that lies in the
intersection of physics, computer science and mathematics.
It began with Kitaev who showed that for $b - a \geq 1/\poly{n}$,
\LH{5}{a}{b} is complete for the class \QMA{} \cite{AharonovN02, KitaevSV02}.
It has subsequently been
improved, reducing the locality of the Hamiltonians to two \cite{KempeKR06} and restricting their
structure \cite{KR03, OliveiraT10, CubittM14, HallgrenNN13}.
These results imply that estimating the groundstate energy of a system
within an inverse polynomial additive factor is hard.
It is natural to ask if it still remains hard if we require only
constant approximation.
The physical interpretation of this problem is connected to
the stability of entanglement in ``room
temperature''.

The second equivalent statement of the quantum PCP conjecture asks
if \LH{k}{a}{b} remains \QMA{}-complete when $b -a$ is constant.
It is stated formally in the conjecture below.

\begin{conjecture}[QPCP Conjecture - Constraint satisfaction version]
	\label{conj:QPCP constraint}
	The Local Hamiltonian problem \LH{k}{a}{b} is \QMA-complete for $k=O(1)$ and
	$b - a = \bigomega{1} $,
	where the \QMA-hardness is with respect to quantum reductions.
\end{conjecture}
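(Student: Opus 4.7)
The plan is to follow the blueprint of Dinur's combinatorial proof of the classical PCP theorem, adapted to the Hamiltonian setting. The QMA-containment half is essentially immediate: for $k = \bigo{1}$, a \QMA{} verifier can estimate the energy of a claimed ground state up to inverse polynomial additive accuracy via phase estimation on a random local term, and this suffices whenever $b - a = \bigomega{1}$. The substance of the proof lies entirely in establishing \QMA{}-hardness under quantum reductions.

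I would start from Kitaev's theorem, which gives \QMA{}-completeness of $\LH{k'}{a'}{b'}$ for some $k' = \bigo{1}$ and $b' - a' \geq 1/\poly{n}$, and then construct a gap-amplification procedure: a quantum reduction mapping any instance with relative gap $\varepsilon$ to one with relative gap $\min\cbr{2\varepsilon, \varepsilon_0}$ for some constant $\varepsilon_0$, while preserving locality $\bigo{1}$. Iterating $\bigo{\log n}$ times then carries the Kitaev instance from inverse polynomial relative gap to a constant, proving the conjecture. The amplification step itself would be built, following the approach of Aharonov, Arad, Landau, and Vazirani, out of three subroutines applied in sequence:
\begin{enumerate}
\item[(i)] a \emph{preprocessing} step that makes the interaction (hyper)graph of the Hamiltonian into a constant-degree expander, at the price of only a constant-factor loss in relative gap;
\item[(ii)] a \emph{quantum powering} step that amplifies the relative gap by a constant factor by summing products of local terms along short walks on the expander, the analysis of which would rely on the Detectability Lemma to control the spectrum of the resulting higher-locality Hamiltonian;
\item[(iii)] a \emph{locality reduction} step that brings the locality back down to $\bigo{1}$ while losing at most a constant factor of the amplified gap.
\end{enumerate}

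The main obstacle, and the reason this conjecture remains open, is step (iii). In the classical proof, alphabet reduction is performed by composing each inflated constraint with an inner PCP of proximity: one replaces the constraint by an encoded sub-instance whose local tester enforces both the original constraint and the proximity of the witness to a codeword. No quantum analog of a PCP of proximity with constant soundness is currently known, and the naive attempts are blocked by two quantum phenomena. First, no-cloning prevents the natural strategy of testing the consistency of a claimed encoding against multiple decoded copies of the supposed witness. Second, the ground space of a local Hamiltonian is in general not contained in any single quantum error-correcting code, so even formulating the right notion of proximity requires care. I would spend most of the effort here, attempting either to construct a quantum low-degree test with constant soundness on suitable encoded states, or to design a quantum locally-testable code that supports local simulation of the original Hamiltonian terms on the encoded space; even a construction with initially only inverse-polynomial proximity parameters would, when combined with step~(ii) in the iteration, suffice to close the gap.

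Steps (i) and (ii) are essentially available from the AALV framework, but tightening their constants so that the iteration actually converges, and ensuring that the reduction remains a quantum polynomial-time map (rather than merely a mathematical transformation of Hamiltonians), are nontrivial secondary tasks. If step (iii) cannot be made to work, a fallback would be to instead prove the weaker Pointer~QPCP conjecture developed later in this paper and to use the \SLH{}/CRESP equivalences introduced there as a stepping stone, separating the combinatorial amplification from the need for genuinely quantum alphabet reduction.
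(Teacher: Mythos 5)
The statement you are asked to prove is \cref{conj:QPCP constraint}, which the paper states as a \emph{conjecture} precisely because no proof is known; the paper itself offers no proof of it and instead proposes a weaker variant (the Pointer QPCP conjecture) as a possibly more tractable target. Your proposal is therefore not a proof but a research program, and you say as much: the entire argument hinges on step (iii), a quantum analogue of alphabet reduction via PCPs of proximity, which you explicitly identify as unavailable and as ``the reason this conjecture remains open.'' A proof that contains a step of the form ``I would spend most of the effort here, attempting either to construct\ldots'' has a gap exactly at that step. Moreover, the gap is not confined to (iii): the quantum powering step (ii) is also not ``essentially available'' from the AALV framework --- the Detectability Lemma gives control over commuting or frustration-free settings and partial spectral information, but a constant-factor amplification of the relative promise gap for general local Hamiltonians that survives composition with (i) and (iii) has never been established, and known obstructions (e.g.\ the entanglement structure of ground states of expander-like interaction graphs, and the NLTS-type questions that were open at the time of this paper) suggest the analysis cannot be a routine port of Dinur's argument.

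What you have written is an accurate and well-informed survey of the standard attack on the quantum PCP conjecture and of why it fails, and your closing ``fallback'' --- to prove the weaker Pointer QPCP conjecture and use the \SLH{k}{a}{b} and CRESP equivalences --- is essentially the motivation for this paper. But the paper does not claim that this fallback proves \cref{conj:QPCP constraint}; the Pointer QPCP conjecture is a strictly weaker statement (the verifier gets classical pointer queries, and the Set Local Hamiltonian gap does not degrade with the set size the way the ordinary Local Hamiltonian gap does), so even a complete proof of it would not yield the conjecture at hand. The honest assessment is that neither you nor the paper proves this statement, and your proposal should be read as a (correct) explanation of why it is hard rather than as a proof.
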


The two versions of the quantum PCP conjecture have been proven equivalent \cite{AharonovAV13}, and since \cref{conj:QPCP constraint} is true for $b - a  \geq 1/\poly{n}$, we can also conclude that $\QMA = \QPCP{q}{\alpha}{\beta}$ with $q = \bigo{1}$ and $\alpha - \beta \geq 1/\poly{n}$.

Let us note that so far there is no equivalent statement of the QPCP
conjecture in the language of multi-prover games, though the approximation of
the maximum acceptance probability of certain multi-prover games up
to an inverse-polynomial additive factor has been proven to be \QMA{}-hard
\cite{FitzsimonsV15}.

\subsection{Our Results}

 In our work, we propose a new type of quantum proof systems, the Pointer QPCP,
and formulate three equivalent versions of the Pointer  QPCP conjecture.
This may help towards proving or disproving the original  QPCP conjecture.
We start by describing a new proof verification system then we provide
a new variant of the Local Hamiltonian problem and last we
describe an equivalent polynomial size multi-prover game.
Up to our knowledge, this is the first time a polynomial size
multi-prover game has been proven equivalent to some  QPCP conjecture.
Our new conjecture is a weaker statement than the original  QPCP conjecture
and hence may be easier to prove.
Moreover, having an equivalent game version of it might also lead to new methods that could potentially be relevant
for attacking the original conjecture as well.

We now give some details of our results. We define a new quantum proof system, where the proof contains two separate parts,
a classical and a quantum proof both of polynomial size.
The verifier can first access a logarithmic number of bits from
the classical proof and, depending on the content, he can then access a constant number
of qubits from the quantum proof.
Since the classical part can be seen as a pointer to the qubits that
will be accessed, we denote this proof system by \PQPCP{q}{\alpha}{\beta}.
To be more specific, the verifier first reads a logarithmic number of bits
from the classical part of the proof and then measures at most $q$ qubits from
the quantum part.
He accepts a yes instance with probability at least $\alpha$ and a no instance
with probability at most $\beta$.
Since a Pointer QPCP is a generalization of QPCP, it follows that all
problems in \QMA{} have a \PQPCP{q}{\alpha}{\beta} proof system with
$\alpha - \beta \geq 1/\poly{n}$.

\begin{conjecture}[Pointer QPCP Conjecture - Proof verification  version]
	\label{conj:weak QPCP verification}
	It holds that $\QMA = \PQPCP{q}{\alpha}{\beta}$ where
	$q = \bigo{1}$ and $\alpha - \beta=\bigomega{1}$.
\end{conjecture}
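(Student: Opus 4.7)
The containment $\PQPCP{q}{\alpha}{\beta} \subseteq \QMA$ is the easy direction. My plan is the standard simulation argument: a $\QMA$ verifier receives the classical and quantum parts as a single proof, samples its own internal coin tosses to emulate the pointer reads, performs the corresponding constant-qubit measurement, and uses parallel repetition $\poly{n}$ times to amplify the constant gap $\alpha - \beta = \bigomega{1}$ into the standard $2/3$ vs $1/3$ gap. The only thing to check is that the classical pointer part can be treated as a (possibly correlated) classical register of the quantum proof without loss of soundness, which follows by measuring it in the computational basis before the rest of the protocol.

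The hard direction is $\QMA \subseteq \PQPCP{q}{\alpha}{\beta}$ with $q = \bigo{1}$ and $\alpha - \beta = \bigomega{1}$. My plan is to route through one of the equivalent formulations the paper introduces, most naturally the Set Local Hamiltonian problem. Starting from an arbitrary $\QMA$ language, I would apply Kitaev's circuit-to-Hamiltonian reduction to obtain a local Hamiltonian with inverse-polynomial gap, then group the local terms into polynomial-size ``sets'' so that a classical pointer of $\bigo{\log n}$ bits can specify which set to check and, conditioned on that, which individual term to measure on the quantum witness. The classical proof then plays the role of a succinct description of how the quantum witness should look on a specific collection of constant-size pieces, while the quantum proof stores the actual (potentially history-state) witness. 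The hope is that this hybrid structure, which essentially superimposes a classical PCP-style access pattern onto a quantum witness, should allow one to import gap-amplification ideas that have resisted fully quantum adaptation, such as Dinur-style graph powering on the set-incidence structure, composition with an assignment tester, or parallel repetition on the equivalent CRESP game.

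The main obstacle will be the gap amplification step itself, which is precisely the difficulty that blocks the original \QPCP{}{}{} conjecture. Even with the pointer, the quantum witness is a single global state that must be simultaneously consistent with the measurements indexed by every pointer value, and soundness has to hold against adversarial witnesses that may be heavily entangled across the induced sets. The classical combinatorial amplification arguments therefore cannot be invoked as a black box: one needs a quantum soundness analysis controlling an entangled witness under an induced product of local tests, exactly where the fully quantum PCP program has stalled. A realistic intermediate target along this route would be a partial amplification lifting $\alpha - \beta$ from $1/\poly{n}$ to, say, $1/\poly{\log n}$, or a reduction to a cleaner sub-question such as a quantum analog of assignment testing compatible with the pointer structure. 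A complete resolution via this plan would be a genuine breakthrough, which is why the statement appears only as a conjecture.
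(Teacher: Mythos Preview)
The statement you were asked to prove is labeled a \emph{conjecture} in the paper, and the paper makes no attempt to prove it; it only establishes that this formulation is equivalent to the Set Local Hamiltonian and CRESP-game formulations (\cref{thm:main}). So there is no ``paper's own proof'' to compare your proposal against.

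Your treatment of the easy containment $\PQPCP{q}{\alpha}{\beta} \subseteq \QMA$ is fine and is essentially what the paper does in \cref{rem:QMA and PQPCP} (the \QMA{} verifier reads the whole proof; amplification handles the gap). For the hard direction $\QMA \subseteq \PQPCP{q}{\alpha}{\beta}$ with constant gap, you correctly locate the obstruction: the missing step is a gap-amplification procedure that survives in the presence of a global entangled witness, and no known technique (Dinur-style powering, assignment testers, parallel repetition on the game side) is known to provide this in the quantum setting. Your own final paragraph already says as much. In short, your proposal is not a proof but an accurate diagnosis of why the statement is a conjecture, which is the appropriate conclusion here.
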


We note that quantum proof systems with classical and quantum parts
have also appeared in \cite{Raz05}. There, the aim was to reduce the number of blocks being read in classical PCPs and hence, in the proposed model, a
logarithmic size quantum proof is provided to the verifier who measures it
and then reads only a single block from a polynomial size classical proof.

In addition to Pointer QPCPs, we also propose a ``constraint satisfaction''
version of the above conjecture which will turn out to be equivalent.  We do
this by defining a new variant of the Local Hamiltonian problem which we call
the Set Local Hamiltonian problem.  Here the input is $m$ sets of a polynomial
number of $k$-local Hamiltonians each, and we ask if there exists a
  representative Hamiltonian from each set such that the Hamiltonian
  corresponding to their
  sum has groundstate energy at most $am$ or for every possible choice of representative
  Hamiltonians from each set, the Hamiltonian corresponding to their sum has groundstate energy at least
$bm$.  We denote the above problem by \SLH{k}{a}{b}.
Since the Local
Hamiltonian problem is a special case of the Set Local Hamiltonian problem,
where the sets are singletons, $\SLH{k}{a}{b}$ is \QMA{}-complete for $k \geq 2$
and $b - a \geq 1/\poly{n}$.

\begin{conjecture}[Pointer QPCP Conjecture - Constraint satisfaction version]
	\label{conj:weak QPCP constraint}
	The \SLH{k}{a}{b} problem is \QMA{}-complete for $k = \bigo{1}$
	and $b - a = \bigomega{1}$.
\end{conjecture}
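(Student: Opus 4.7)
The statement splits into two directions: containment of $\SLH{k}{a}{b}$ with constant gap in $\QMA$, and $\QMA$-hardness at the constant gap. The first is routine; the second is the crux, and is the reason the statement is a conjecture rather than a theorem.

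For containment, my plan is to use the classical ``pointer'' as the choice of representatives. A $\QMA$ prover sends a classical description of one Hamiltonian from each of the $m$ sets, together with a quantum state $\ket{\psi}$ intended to be a low-energy state of the resulting sum. The verifier samples an index $i$ uniformly in $\cbr{1,\ldots,m}$, reads the corresponding classical choice, and performs the standard two-outcome energy measurement of the chosen $k$-local term on $\ket{\psi}$. With a constant promise gap $b-a=\bigomega{1}$ the completeness--soundness gap is already constant after a single sample, so $\SLH{k}{a}{b}\in\QMA$; a standard Chernoff amplification handles any polynomial rescaling of the gap.

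For the hard direction, the natural starting point is Kitaev's reduction, which produces a $\QMA$-hard instance of $\LH{k}{a'}{b'}$ with $b'-a'=1/\poly{n}$. The plan is to boost this to constant gap by exploiting the extra freedom that sets and pointers provide, in two layers. First, apply a classical-PCP-style reduction to the \emph{classical} portion of the verifier's computation, namely the random coins and the induced choice of which qubits of the proof to measure: the coins become a PCP proof whose local consistency can be checked constraint by constraint, each constraint involving only $\bigo{1}$ bits. Second, lift each classical PCP check to a \emph{set} of $k$-local Hamiltonians whose elements correspond to the different decoding branches of that check, so that choosing a representative from every set is the same as fixing a globally consistent PCP witness, and any local inconsistency pays a constant energy penalty on a constant fraction of the sets. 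The Hamiltonian selected inside each branch is the Kitaev history-state term restricted to that branch of the computation.

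The main obstacle, and the reason this remains conjectural, is that the scheme above only amplifies the \emph{classical} portion of the verifier's gap. The genuinely quantum part of Kitaev's verification still contributes an unavoidable $1/\poly{n}$ gap, because no robust clock-register construction is known. Closing this requires either a true quantum analog of Dinur's gap amplification applied to the Hamiltonians inside the sets---exploiting the fact that the pointer model lets one local check be expanded into polynomially many correlated local checks at only constant overhead in $k$---or a way to leverage the CRESP game reformulation to import a parallel-repetition-style amplification that is otherwise out of reach for plain QPCP proof systems. I expect any honest proof attempt to stall at precisely this step, since amplifying the quantum gap is essentially the original QPCP conjecture reappearing, merely softened by the additional pointer structure.
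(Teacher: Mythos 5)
You have correctly identified the essential point: the statement is a \emph{conjecture}, and the paper offers no proof of it, so there is nothing to compare your hardness argument against. What the paper does establish about \SLH{k}{a}{b} is (i) \cref{rem:SLH and QMA}, namely \QMA{}-completeness for $k\geq 2$ and $b-a\geq 1/\poly{n}$, and (ii) \cref{thm:main}, the equivalence of \cref{conj:weak QPCP constraint} with the proof-verification and game formulations. Your containment direction is sound and essentially matches the paper's: the witness carries a classical description of the selector function $f$ together with a purported low-energy state, and the verifier estimates the energy of $\sum_i H_{i,f(i)}$ (the paper phrases this via eigenvalue estimation; your single-sample local measurement works equally well and already preserves a constant gap). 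Your hardness sketch is honestly labeled as stalling, and it stalls for the right reason --- no known gap amplification survives the quantum part of Kitaev's construction --- though one small correction is in order: you suggest that the CRESP reformulation might ``import'' a parallel-repetition-style amplification, but the paper makes no such claim; the game version is proven \emph{equivalent} to the constraint-satisfaction version, so any amplification available there is exactly as hard to obtain. In short, your proposal is a correct assessment of an open problem rather than a proof, which is consistent with the paper's own treatment.
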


\begin{figure}[t b]
  \begin{center}
\begin{tikzpicture}
  \coordinate (P) at (-1,0);
  \coordinate (S) at (1,0);
  \coordinate (G) at (0,1.65);

  \coordinate (LG) at (-1.5,0.2);
  \coordinate (RS) at (1.6,0.3);
  \coordinate (LP) at (-0.1,1.73);
  \coordinate (RP) at (0.1,1.72);

  \draw [->, thick] (RS) -- (RP);
  \draw [->, thick] (LP) -- (LG);
  \draw [<-, thick] (S) -- (P);

  \node[label={right: \Cref{conj:weak QPCP constraint} (Set Local Hamiltonians)}] at (S) {};
  \node[label={above: \Cref{conj:weak QPCP game} (CRESP games)}] at (G) {};
  \node[label={left: \Cref{conj:weak QPCP verification}
	(Pointer QPCPs)}] at (P) {};
\end{tikzpicture}
\end{center}
\caption{Arrows of implications in our proof of the equivalence
	of the three conjectures.}
\label{fig:arrows of implications}
\end{figure}
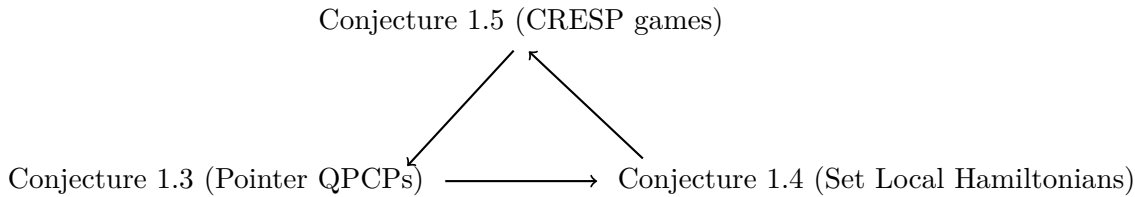

As mentioned earlier, the classical PCP theorem has another interesting
equivalent formulation regarding the approximation of the
maximum acceptance probability of multi-prover games \cite{Raz98}, while the same is not known for the quantum case.
We propose an equivalent multi-prover game formulation of the Pointer
QPCP conjecture.
Our game, which we call CRESP (Classical and Restricted-Entanglement Swapping-Provers) game,
was inspired by the work of Fitzsimons and Vidick \cite{FitzsimonsV15}.
However, in order to prove an equivalence, we had to drastically change the game. In their work, a multi-prover game
is proposed for the Local Hamiltonian problem in which the completeness-soundness
gap is inverse polynomial. If we try to follow the same proof but with an instance of the Local Hamiltonian
with constant gap, the gap does not survive and at the end there will be an
inverse-polynomial gap in the game. Hence we are not able to prove the equivalence with the standard  QPCP conjecture.

We define our CRESP game to have one classical prover and logarithmically
many quantum provers who are restricted both in the strategies they can
perform and also in the initial quantum state they share.
The verifier asks a single question of logarithmic length to all of them, the
classical prover replies with logarithmically many bits, while the quantum
provers reply with $k$ $4$-dimensional qudits. (For simplicity,
we will omit the dimension of the qudit system in the rest of the paper.)
The promise problem \CRESP{k}{\alpha}{\beta} informally asks if we can
distinguish between the cases when the provers win the game with
probability at least $\alpha$ or at most $\beta$.
Similarly to the previous problems, we will see that \CRESP{k}{\alpha}{\beta} is
\QMA{}-complete for $\alpha - \beta \geq 1/\poly{n}$.
See \cref{lemma:CRESPcomplete} for the precise statement.

\begin{conjecture}[Pointer QPCP Conjecture - Game version]
	\label{conj:weak QPCP game}
	The \CRESP{k}{\alpha}{\beta} problem is \QMA{}-complete for $k = \bigo{1}$
	and $\alpha - \beta = \bigomega{1}$.
\end{conjecture}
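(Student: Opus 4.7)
The plan is to split the \QMA{}-completeness claim into containment and hardness, and to address the constant-gap requirement with a dedicated amplification step. This is a direct attack on \cref{conj:weak QPCP game}, not merely a translation to the other two conjectures.

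For \CRESP{k}{\alpha}{\beta} $\in$ \QMA{}, I would build a \QMA{} verifier whose polynomial-size quantum witness records (i) the classical prover's deterministic strategy, presented as a classical table mapping each logarithmic-length question to its logarithmic-length answer, and (ii) a purification of the restricted shared state of the quantum provers together with classical descriptions of their restricted measurement strategies. On input, the verifier samples a question uniformly, reads the classical prover's response from the table, applies the prescribed quantum operations to the relevant $\bigo{k}$ qudits of the purification, and evaluates the CRESP predicate. Because there are $\bigo{\log n}$ quantum provers each sending $k = \bigo{1}$ qudits, the witness size, circuit depth, and number of measured qubits are all polynomial, and the acceptance probability matches the game's value exactly, so any constant gap is preserved. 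This direction is unconditional.

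For \QMA{}-hardness with a constant gap, I would produce a CRESP instance from an arbitrary \QMA{} instance in two stages. First, applying the Fitzsimons--Vidick-style encoding adapted to the Pointer model, the \QMA{} verification circuit is compiled into a CRESP game where the classical prover selects which local consistency check the quantum provers must pass on their shared witness; this yields the polynomial-gap \QMA{}-completeness recorded in \cref{lemma:CRESPcomplete}. Second, to lift the gap from $1/\poly{n}$ to $\bigomega{1}$, I would apply a gap-amplification step tailored to CRESP: either a parallel-repetition argument that exploits the restricted-entanglement constraint on the quantum provers together with the classical prover's combinatorial commitment to a pointer, or a Dinur-style combinatorial amplification performed on a \SLH{k}{a}{b} representation of the instance followed by transport back to CRESP via a direct construction (the analogue of the game-from-Hamiltonian reduction behind \cref{lemma:CRESPcomplete}, but now gap-preserving because the input Hamiltonian has constant relative gap by construction).

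The main obstacle is unambiguously this amplification step. Parallel repetition of quantum multi-prover games with unrestricted entanglement need not reduce soundness at the classical rate, and even in the restricted setting used here one must show that a cheating $t$-fold strategy yields a cheating single-shot strategy; the hope is that the classical pointer, forced to remain consistent across rounds by the entanglement restriction, supplies enough structure to port a Raz-style analysis to the quantum setting. Failing a direct game-theoretic attack, the combinatorial route requires gap amplification for \SLH{k}{a}{b} while preserving the set structure, which in turn must contend with the non-commutativity of local Hamiltonian terms. This is the central open problem in quantum Hamiltonian complexity and the very reason \cref{conj:weak QPCP game} is stated as a conjecture; any successful completion of this step would simultaneously resolve all three Pointer QPCP conjectures.
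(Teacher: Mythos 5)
The statement you are asked to prove is a \emph{conjecture}: the paper itself offers no proof of \cref{conj:weak QPCP game}, and indeed no proof is known. What the paper does establish is (i) the inverse-polynomial-gap version, namely that \CRESP{k}{\alpha}{\beta} is \QMA{}-complete for $k=\bigo{1}$ and $\alpha-\beta\geq 1/\poly{n}$ (\cref{lemma:CRESPcomplete}), and (ii) the equivalence of \cref{conj:weak QPCP game} with \cref{conj:weak QPCP verification,conj:weak QPCP constraint} (\cref{thm:main}), via a gap-preserving reduction from \SLH{k}{a}{b} to \CRESP{k}{1-a/2}{1-b/2} in one direction and a Pointer QPCP simulation of CRESP games in the other. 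Your proposal is therefore correct in the only sense available: you prove the containment $\CRESP{k}{\alpha}{\beta}\in\QMA$ essentially as the paper does (witness = pre-encoding state plus classical descriptions of all provers' strategies, verifier simulates the game, value preserved exactly), and you correctly isolate gap amplification as the irreducibly open step.

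Two remarks on how your route relates to the paper's. First, your second amplification option --- amplify on an \SLH{k}{a}{b} representation and ``transport back to CRESP via a direct construction \ldots\ gap-preserving because the input Hamiltonian has constant relative gap'' --- is precisely the paper's reduction (\cref{thm:hardness cresp}): \SLH{k}{a}{b} maps to a CRESP game with completeness $1-a/2$ and soundness $1-b/2$, so a constant gap $b-a$ survives. Conditionally on \cref{conj:weak QPCP constraint}, your argument is the paper's argument, and the residual difficulty you name (gap amplification for \SLH{k}{a}{b} preserving the set structure) is exactly the reformulation the paper offers of the open problem. Second, your first option (a Raz-style parallel repetition for CRESP games) is \emph{not} pursued in the paper, and you should be cautious about one structural point: the CRESP verifier sends a \emph{single} question to all provers and the provers' only freedom is which qudits of a fixed encoded state to swap out, so it is not clear that $t$-fold repetition even remains a CRESP game as defined (the shared state format and the single-question structure are part of the definition). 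Any repetition theorem would first need a repeated-game model that still reduces back to a single CRESP instance, which is an additional gap in that branch of your plan beyond the one you already flag.
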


Our main result is the equivalence of the above three
formulations of the Pointer QPCP conjecture.
It is stated formally in the following theorem.

\begin{theorem}[Main theorem]
	\label{thm:main}
    The three versions of the Pointer QPCP conjecture
    (\cref{conj:weak QPCP verification,conj:weak QPCP constraint,conj:weak QPCP game}) are either all true or all false.
\end{theorem}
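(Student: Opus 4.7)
The plan is to establish equivalence by proving the cycle of implications depicted in \cref{fig:arrows of implications}: namely \cref{conj:weak QPCP verification} $\Rightarrow$ \cref{conj:weak QPCP constraint} $\Rightarrow$ \cref{conj:weak QPCP game} $\Rightarrow$ \cref{conj:weak QPCP verification}. Since containment in $\QMA{}$ of each version (for constant $q$, $k$ and gap) is already implied by the corresponding $1/\poly{n}$-gap statements that hold unconditionally, it suffices to push the constant gap along each arrow.

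For \cref{conj:weak QPCP verification} $\Rightarrow$ \cref{conj:weak QPCP constraint}, I would adapt Kitaev's Hamiltonian encoding to the pointer setting. For each of the $\poly{n}$ random coin outcomes $r$, the Pointer QPCP verifier reads an $O(\log n)$-bit string from the classical proof and then measures $q$ qubits of the quantum proof, with the choice of qubits and measurement depending on those classical bits. I would associate to $r$ a set $S_r$ containing one projector $H_{r,s}$ for each of the $\poly{n}$ possible contents $s$ of the bits read under $r$, where $H_{r,s}$ is the projector onto the rejecting outcome of the induced $q$-local measurement. Completeness is immediate by letting the classical proof fix one representative per set and the quantum proof be the ground state. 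For soundness, a choice of representatives together with a quantum state in the SLH instance corresponds to a (possibly coin-adaptive) Pointer QPCP adversary, which is at least as strong as the original model; structuring the classical proof so that bits read under different $r$ are disjoint (a polynomial blowup) makes this translation clean and preserves the constant gap multiplicatively.

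For \cref{conj:weak QPCP constraint} $\Rightarrow$ \cref{conj:weak QPCP game}, I would adapt the multi-prover protocol of \cite{FitzsimonsV15} to the set structure. The CRESP game has one classical prover whose $O(\log n)$-bit answer to each question encodes the choice of representative from the set indexed by the verifier's question, and $O(\log n)$ quantum provers sharing a restricted state that encodes the ground state of the selected sum-Hamiltonian. On a single random question, the verifier first uses the classical prover's answer to identify which SLH term to test, then performs a local energy/consistency check on the constant number of qudits returned by the quantum provers. The crucial difference from \cite{FitzsimonsV15} is that we start from an SLH with a \emph{constant} gap, and the CRESP restrictions on shared entanglement and allowed strategies must be precisely strong enough that this constant gap survives the reduction, rather than collapsing to $1/\poly{n}$ as happens in the unrestricted setting.

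For \cref{conj:weak QPCP game} $\Rightarrow$ \cref{conj:weak QPCP verification}, the reduction essentially inverts the previous one. The classical part of the Pointer QPCP proof is the full answer table of the classical prover on all $\poly{n}$ questions (each entry $O(\log n)$ bits), and the quantum part contains the shared restricted state of the quantum provers together with the classical description of their local unitaries. On a random question, the Pointer QPCP verifier reads the corresponding $O(\log n)$-bit classical answer as its pointer and then measures the $O(1)$ qudits in the quantum part that the quantum provers' strategies on that question would produce, accepting according to the CRESP decision rule. Completeness and the constant CRESP gap transfer directly. I expect the main obstacle to lie in the middle step \cref{conj:weak QPCP constraint} $\Rightarrow$ \cref{conj:weak QPCP game}, since preserving a constant gap is exactly what fails in the analogous \cite{FitzsimonsV15} reduction for plain Local Hamiltonians; the technical heart will be a soundness analysis showing that any CRESP strategy winning with probability $\alpha$ yields both a consistent choice of set representatives and a quantum state achieving correspondingly low energy, with only a constant multiplicative loss.
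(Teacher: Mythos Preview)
Your overall plan matches the paper exactly: the same cycle of implications, the same construction at each arrow. Steps 1 and 2 are fine (your worry about making classical cells disjoint is unnecessary, since in the Pointer QPCP model of \cref{def:PQPCP} the verifier's randomness already \emph{is} a single index $i$ into the classical proof, and he reads only the cell $y_i$).

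There is a real slip in step 3. You place ``the classical description of [the quantum provers'] local unitaries'' in the \emph{quantum} part of the Pointer QPCP proof. But the verifier may touch only $q=\bigo{1}$ qubits of the quantum part, so he cannot extract an $\bigo{\log n}$-bit swap description from there; and for an arbitrary CRESP game the classical prover's answer $c_i$ alone does not determine which qudits the quantum provers send. The paper's fix is that each classical cell $y_i$ stores the \emph{pair} $(s_i,c_i)$: the quantum provers' swap choice on question $i$ together with the classical prover's answer. Both are $\bigo{\log n}$ bits, and both must sit in the pointer so that after reading one cell the verifier knows which $k$ qubits of the quantum proof to access. This is precisely where the same-strategy restriction on the CRESP quantum provers bites: with $\ceil{\log(n+1)}$ provers allowed distinct strategies, $s_i$ would be $\bigomega{\log^2 n}$ bits and overflow the pointer. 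Relatedly, the quantum part of the proof must be the \emph{pre-encoded} $n$-qubit state $\ket{\phi}$, not the shared encoded state $\fn{\Epsilon}{\ket{\phi}}$; the verifier reads $k$ raw qubits and applies $\Epsilon$ to them locally, since reading the encoding directly would cost $\bigo{k\log n}$ qudits. With these two corrections your step 3 becomes the paper's argument verbatim.
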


The proof is divided into three steps:
first, we show that \cref{conj:weak QPCP verification}
implies \cref{conj:weak QPCP constraint}; second, we show that \cref{conj:weak QPCP constraint}
implies \cref{conj:weak QPCP game}; and finally, we prove that \cref{conj:weak QPCP game}
implies \cref{conj:weak QPCP verification}.
The arrows of implications are depicted in \cref{fig:arrows of implications}.

The paper is organized as follows:
In \cref{sec:Preliminaries}, we describe some standard definitions required for the rest of the paper.
In \cref{sec:Definitions}, we present the definitions of our new notions, the Pointer  QPCPs, the Set Local Hamiltonian problem, and the CRESP games.
The proof of equivalence is presented in \cref{sec:equivalence}. We conclude the paper with some discussion and open problems
in \cref{sec:open problems}.


\section{Preliminaries}
\label{sec:Preliminaries}
In this section we provide some definitions that we use in the paper.
We start by defining \QMA{}, the quantum analogue of \NP{}.

  \begin{definition}[Quantum Merlin-Arthur proof systems]
	\label{def:QMA}
  Let $n \in \Zplus$ be the input size and $p$ be a polynomial. A QMA protocol proceeds in the following steps.
  \begin{enumerate}
    \item The verifier receives an input $x$ and a quantum proof $\ket{\psi}$
		of size $p(n)$.
  \item The verifier runs in polynomial time in $n$. He performs a general POVM
    measurement on $\ket{\psi}$ and
    decides on the acceptance or rejection of the input.
  \end{enumerate}
  A promise problem $A=(\ayes,\ano)$  belongs to \QMA{} if it has a
  QMA proof system with the following properties.
	\begin{description}
    \item[Completeness.] If $x \in \ayes$ then there is a $\ket{\psi}$ such that
      the verifier accepts with
      probability at least $\frac{2}{3}$.
    \item[Soundness.] If $x\in\ano$ then for all $\ket{\psi}$ the
      verifier accepts with probability at most $\frac{1}{3}$.
	\end{description}
  \end{definition}

Now we present the Local Hamiltonian problem, the quantum analogue of MAX-SAT.

\begin{definition}
  \label{def:local Hamiltonian}
  The \emph{Local Hamiltonian} problem is denoted by \LH{k}{a}{b}
  where $k \in \Zplus$ is called the locality and for $a,b \in \real$
  it holds that $a<b$.
  It is the following promise problem. Let $n$ be the number of qubits of a quantum system.
  The input is a set of $m(n)$ Hamiltonians $H_1, \ldots, H_{m(n)}$
  where $m$ is a polynomial in $n$, $\forall i \in \Br{m(n)} : 0 \leq H_i \leq \id $
  and each $H_i$ acts on $k$ qubits out of the $n$ qubit system.
  For $H \eqdef \sum_{j = 1}^{m(n)} H_j$ the following two conditions hold.
    \begin{itemize}
      \item In a YES instance there exists a
      state $\ket{\varphi} \in \complex^{2^{n}}$ such that
      $
        \bra{\varphi} H \ket{\varphi}
        \leq a \cdot m(n) .
      $
      \item In a NO instance for all states $\ket{\varphi} \in \complex^{2^{n}}$
      it holds that
      $
        \bra{\varphi} H \ket{\varphi}
        \geq b \cdot m(n) .
      $
    \end{itemize}
\end{definition}

Kitaev proved that for $k \geq 5$ and $b - a \geq 1/\poly{n}$
the \LH{k}{a}{b} problem is \QMA{}-complete \cite{KitaevSV02}.
This completeness result was later improved for $k \geq 2$ \cite{KempeKR06}.

We now define the quantum analogue of PCPs, a quantum proof system
where the verifier only checks a few
qubits from the quantum proof.

\begin{definition}[Quantum Probabilistically Checkable Proofs]
  \label{def:qpcp}
  Let $n \in \Zplus$ be the input size and $p$ be a polynomial. A QPCP protocol proceeds in the following steps.
  \begin{enumerate}
    \item The verifier receives an input $x$ and a quantum proof $\ket{\psi}$
		of size $p(n)$.
  \item The verifier runs in time polynomial in $n$. He picks $O(\log{n})$ bits
    uniformly at random, and based on the input and on the random bits, he
    performs a general POVM measurement on  $q$ qubits, and decides on acceptance or rejection of
    the input.
  \end{enumerate}
  A promise problem $A=(\ayes,\ano)$  belongs to \QPCP{q}{\alpha}{\beta} if it has a
  QPCP proof system with the following properties.
	\begin{description}
    \item[Completeness.] If $x \in \ayes$ then there is a $\ket{\psi}$ such
      that the verifier accepts with probability at least $\alpha$.
    \item[Soundness.] If $x\in\ano$ then for all $\ket{\psi}$ the
      verifier accepts with probability at most $\beta$.
	\end{description}
\end{definition}

We can easily prove the following statement.
\begin{lemma}
	\label{rem:QMA and QPCP}
	It holds that $\QMA = \QPCP{q}{\alpha}{\beta}$ where
	$q = \bigo{1}$ and $\alpha - \beta \geq 1/\poly{n}$.
\end{lemma}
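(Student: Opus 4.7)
The plan is to prove the two inclusions separately, using the QMA-completeness of the Local Hamiltonian problem (Kitaev, or Kempe--Kitaev--Regev) as the main tool for the nontrivial direction.

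For the easy inclusion $\QPCP{q}{\alpha}{\beta}\subseteq\QMA$ with $q=O(1)$ and $\alpha-\beta\geq 1/\poly{n}$, I would observe that a QPCP verifier is already a legal QMA verifier: a QMA verifier can simply sample the $O(\log n)$ random bits internally, then perform the same POVM on the constant-size subset of qubits selected. The only thing missing is the $(2/3,1/3)$ gap required by \cref{def:QMA}, but any $1/\poly{n}$ gap can be boosted to $(2/3,1/3)$ by standard QMA gap amplification (for example, parallel repetition of the protocol on independent copies of the proof state, or in-place amplification \`a la Marriott--Watrous). This step is completely routine.

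For the other inclusion $\QMA\subseteq\QPCP{q}{\alpha}{\beta}$, I would reduce via the Local Hamiltonian problem. By the Kitaev-style \QMA-completeness of $\LH{k}{a}{b}$ for $k=O(1)$ and $b-a\geq 1/\poly{n}$ (as recalled right after \cref{def:local Hamiltonian}), it suffices to put $\LH{k}{a}{b}$ into $\QPCP{q}{\alpha}{\beta}$ for appropriate constant $q$ and inverse-polynomial gap. The QPCP verifier receives a purported ground state $\ket{\psi}$ as the quantum proof. He uses $O(\log n)$ random bits to pick a uniformly random index $i\in[m(n)]$, locates the $k$ qubits on which $H_i$ acts, and performs the two-outcome POVM $\{I-H_i,H_i\}$ on them (which is well-defined because $0\leq H_i\leq I$), accepting on outcome $I-H_i$. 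The probability of acceptance on proof $\ket{\psi}$ is exactly
\[
\frac{1}{m(n)}\sum_{i=1}^{m(n)}\bra{\psi}(I-H_i)\ket{\psi}=1-\frac{1}{m(n)}\bra{\psi}H\ket{\psi}.
\]
In the YES case, taking $\ket{\psi}$ to be the promised low-energy state gives acceptance at least $\alpha\eqdef 1-a$; in the NO case, for every $\ket{\psi}$ the acceptance is at most $\beta\eqdef 1-b$. The verifier touches only $q=k=O(1)$ qubits, uses $O(\log n)$ random bits, and runs in polynomial time, so this is a valid $\QPCP{q}{\alpha}{\beta}$ protocol with $\alpha-\beta=b-a\geq 1/\poly{n}$.

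There is no real obstacle here; the only points to double-check are that the reduction from an arbitrary \QMA{} language to Local Hamiltonian preserves the polynomial input size (it does, since Kitaev's circuit-to-Hamiltonian construction is a polynomial-time reduction) and that treating $H_i$ as a POVM element is legitimate (which it is by the assumption $0\leq H_i\leq I$ in \cref{def:local Hamiltonian}). Combining the two inclusions yields the stated equality.
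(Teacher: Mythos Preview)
Your proof is correct and follows essentially the same route as the paper: the containment $\QPCP{q}{\alpha}{\beta}\subseteq\QMA$ is dismissed as trivial (with gap amplification), and the reverse containment is obtained via Kitaev's Local Hamiltonian result, noting that the ``pick a random term $H_i$ and measure $\{I-H_i,H_i\}$'' verifier is already a QPCP verifier. The paper states this in one line (``The \QMA{} verifier in the proof is also a QPCP verifier''), whereas you spell out the acceptance-probability calculation explicitly, but the argument is the same.
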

\begin{proof}
	The containment $\QPCP{q}{\alpha}{\beta} \subseteq \QMA$ is trivial
	since the \QMA{} verifier can read the whole proof and the power
	of \QMA{} doesn't change if the gap is inverse-polynomial.
	The other direction of the containment follows from Kitaev's proof
	that the 5-Local Hamiltonian is \QMA-complete
	\cite{KitaevSV02,AharonovN02}.
	The \QMA{} verifier in the proof is also a QPCP verifier.
\end{proof}

The quantum PCP conjecture has two equivalent versions:

\begin{theorem}[\cite{AharonovAV13}]
	The class $\QMA$ is equal to the class $\QPCP{q}{\alpha}{\beta}$ with
	$q = \bigo{1}$ and $\alpha - \beta = \bigomega{1}$ (\cref{conj:QPCP verification}) if and only if
	the Local Hamiltonian problem \LH{k}{a}{b} is \QMA-complete for $k=O(1)$ and
	$b - a = \bigomega{1} $,
	where the \QMA-hardness is with respect to quantum reductions (\cref{conj:QPCP constraint}).
\end{theorem}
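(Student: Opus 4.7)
The plan is to establish both directions of the equivalence by exploiting the correspondence between $q$-qubit two-outcome POVMs $\cbr{M, \id - M}$ and $q$-local Hamiltonian terms $H$ satisfying $0 \leq H \leq \id$.

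For the direction \cref{conj:QPCP constraint} $\Rightarrow$ \cref{conj:QPCP verification}, I would start with a language $L \in \QMA$ and use the assumed reduction to obtain an instance $H_1, \ldots, H_m$ of \LH{k}{a}{b} with $k = \bigo{1}$ and $b - a = \bigomega{1}$. The QPCP verifier requests the alleged ground state as its quantum proof, samples $i \in \Br{m}$ using $\bigo{\log n}$ random bits, and measures the POVM $\cbr{H_i, \id - H_i}$ on the $k$ qubits on which $H_i$ acts, accepting on the second outcome. A direct calculation gives acceptance probability equal to $1 - \bra{\psi} H \ket{\psi} / m$, so completeness is at least $1 - a$ and soundness at most $1 - b$, yielding a constant gap of $b - a$.

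For the direction \cref{conj:QPCP verification} $\Rightarrow$ \cref{conj:QPCP constraint}, I would take a constant-gap QPCP verifier for $L \in \QMA$ that uses $r = \bigo{\log n}$ random bits and probes $q = \bigo{1}$ qubits. For each random string $s \in \cbr{0,1}^{r}$, let $\cbr{M_s, \id - M_s}$ denote the associated accept/reject POVM supported on the $q$ qubits in some set $S_s$. Define $H_s \eqdef \id - M_s$ extended trivially outside $S_s$; since $0 \leq M_s \leq \id$, each $H_s$ is a valid $q$-local Hamiltonian term. With $H \eqdef \sum_s H_s$ and $m \eqdef 2^{r} = \poly{n}$, a short computation gives $\bra{\psi} H \ket{\psi} / m = 1 - p_{\mathrm{acc}}(\psi)$, so YES instances admit a state with $\bra{\psi} H \ket{\psi} \leq (1 - \alpha) m$, while all states satisfy $\bra{\psi} H \ket{\psi} \geq (1 - \beta) m$ on NO instances, producing a Hamiltonian gap of $\alpha - \beta = \bigomega{1}$.

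The main obstacle, and what makes the theorem nontrivial beyond this sketch, is ensuring the reductions preserve locality and gap simultaneously across the required parameter regimes, and that the map from a QMA instance to a constant-gap LH instance is itself efficient. This is why the statement explicitly asks for QMA-hardness under quantum reductions: intermediate processing needed to convert a QPCP verifier's description into a well-normalized LH instance with the right constants (for example, standardizing POVM elements or composing with gap-amplification steps) may naturally require quantum operations, and this flexibility is what permits the equivalence to hold without loss of the constant gap parameters.
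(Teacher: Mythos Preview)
The paper does not actually prove this theorem; it is stated in the preliminaries with a citation to \cite{AharonovAV13} and no proof follows. So there is no ``paper's own proof'' to compare against directly.

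That said, your sketch is the standard argument and it matches precisely what the present paper does for the analogous Pointer QPCP statements: your direction \cref{conj:QPCP verification} $\Rightarrow$ \cref{conj:QPCP constraint} (defining each Hamiltonian term as the rejection POVM element $H_s = \id - M_s$ and reading off the energy--acceptance correspondence) is exactly the construction in \cref{sec:pcp to lh}, specialized to singleton sets; and your direction \cref{conj:QPCP constraint} $\Rightarrow$ \cref{conj:QPCP verification} (sample a term uniformly and measure $\cbr{H_i, \id - H_i}$) is the same verifier used in the proof of \cref{rem:QMA and QPCP}. Your final paragraph correctly identifies the one nontrivial point, namely that writing down the $H_s$ from the QPCP verifier's circuit description may require quantum computation to approximate the POVM elements to sufficient precision, which is indeed the reason \cite{AharonovAV13} phrases the hardness side with respect to quantum reductions.
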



\section{Pointer QPCPs, Set Local Hamiltonians, and CRESP Games}
\label{sec:Definitions}

In this section, we present the definitions required for our conjectures.
We start by defining Pointer QPCPs, a generalized version of
QPCPs, in which the verifier can read a small number of bits from the
classical part of the proof and then, based on that, read a constant number
of qubits from the quantum part of the proof.
Then we propose the Set Local Hamiltonian problem
that can be thought of as a ``constraint satisfaction'' version of the conjecture.
Finally, we define CRESP games which are restricted multi-prover games
for which approximation of their value will turn out to be
equivalent to the other two formulations.


\subsection{Pointer QPCPs}

\begin{definition}
    \label{def:PQPCP}
    Let $n \in \Zplus$ be the input size and let $m,l,p$ be polynomials.
    A Pointer QPCP protocol proceeds in the following steps.
  \begin{enumerate}
	\item The verifier receives an input $x$ and a two-part proof
		of size $m(n)+p(n)$ in the form
        \[y_1...y_{m(n)} \otimes \ket{\psi}\]
        where $y_i \in [l(n)]$ (i.e., each $y_i$ can be  written with \bigo{\log n} bits)
        and \ket{\psi} is a state of $p(n)$ qubits.
        We refer
        to $y_1...y_{m(n)}$ as the classical part of the proof and $\ket{\psi}$ as
        the quantum part of the proof.
  \item The verifier runs in time polynomial in $n$.
    He chooses uniformly at random a position $i \in [m(n)]$ of the classical proof to read.
    Then, based
    on his input, the random bits and the value of $y_i$, he chooses $q$ qubits
    from the quantum proof, performs a general POVM measurement on them, and decides on
    acceptance or rejection of the input.
  \end{enumerate}
A promise problem $A=(\ayes,\ano)$ belongs to \PQPCP{q}{\alpha}{\beta} if it has a
  Pointer QPCP proof system with the following properties.
	\begin{description}
		\item[Completeness.] If $x \in \ayes$
			then there exists $y_1...y_{m(n)} \otimes \ket{\psi}$ such that verifier accepts
			with probability at least $\alpha$.
		\item[Soundness.] If $x \in \ano$
			then for all $y_1...y_{m(n)} \otimes \ket{\psi}$ the verifier accepts with probability at most $\beta$.
	\end{description}

\end{definition}

\begin{lemma}
	\label{rem:QMA and PQPCP}
	$\QMA = \PQPCP{q}{\alpha}{\beta}$ where $q = \bigo{1}$
	and $\alpha - \beta \geq 1/\poly{n}$.
\end{lemma}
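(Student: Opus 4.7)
My plan is to establish the two containments separately, leveraging the already-proved \cref{rem:QMA and QPCP}.

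For the inclusion $\PQPCP{q}{\alpha}{\beta} \subseteq \QMA$ whenever $q = \bigo{1}$ and $\alpha - \beta \geq 1/\poly{n}$, I would observe that a \QMA{} verifier can simply receive the full two-part proof $y_1 \ldots y_{m(n)} \otimes \ket{\psi}$ as its quantum witness (the classical string is encoded as a computational-basis state). The \QMA{} verifier then simulates the Pointer QPCP verifier: it measures the classical register in the computational basis to obtain a string $\tilde{y}_1 \ldots \tilde{y}_{m(n)}$, picks $i \in [m(n)]$ uniformly at random, reads $\tilde{y}_i$, and runs the induced POVM on the chosen $q$ qubits of $\ket{\psi}$. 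Completeness and soundness parameters $\alpha, \beta$ are preserved (the honest prover sends the basis state of the classical part), and since $\alpha - \beta \geq 1/\poly{n}$, standard \QMA{} amplification boosts the gap to $\br{2/3, 1/3}$.

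For the inclusion $\QMA \subseteq \PQPCP{q}{\alpha}{\beta}$, the plan is to note that a Pointer QPCP can trivially simulate a QPCP: the prover puts an arbitrary (say constant) string in the classical part and the genuine quantum witness in the quantum part, while the verifier ignores $y_i$ when choosing which qubits to measure. This shows $\QPCP{q}{\alpha}{\beta} \subseteq \PQPCP{q}{\alpha}{\beta}$ for the same parameters. Combining with \cref{rem:QMA and QPCP}, which states $\QMA = \QPCP{q}{\alpha}{\beta}$ for $q = \bigo{1}$ and $\alpha - \beta \geq 1/\poly{n}$, we conclude $\QMA \subseteq \PQPCP{q}{\alpha}{\beta}$ with the required parameters.

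There is no real obstacle here: both directions are essentially definitional once we invoke \cref{rem:QMA and QPCP} for the hard direction and use the fact that allowing a classical auxiliary string plus random access only strengthens the verifier's toolkit, so nothing is lost. The only small subtlety to address in the write-up is that in the $\PQPCP \subseteq \QMA$ direction, a cheating prover may supply a superposition over classical strings rather than a computational-basis state; but since the simulating \QMA{} verifier can measure the classical register first, this only fixes some distribution over classical proofs and the acceptance probability is then a convex combination of acceptance probabilities on fixed classical proofs, which is still bounded by $\beta$ on no instances.
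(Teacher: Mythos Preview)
Your proposal is correct and follows essentially the same approach as the paper: both directions are handled by observing that $\QPCP{q}{\alpha}{\beta} \subseteq \PQPCP{q}{\alpha}{\beta}$ (the classical part can be ignored) together with \cref{rem:QMA and QPCP} for the hard inclusion, and that a \QMA{} verifier can simply read the whole proof for the easy inclusion. Your write-up is in fact more detailed than the paper's, which dispatches the $\PQPCP \subseteq \QMA$ direction in a single sentence without mentioning the superposition-over-classical-strings subtlety you address.
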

\begin{proof}
	Since Pointer QPCPs are generalizations of QPCPs, we have that
	$\QPCP{q}{\alpha}{\beta} \subseteq \PQPCP{q}{\alpha}{\beta}$
	for any values of $q$, $\alpha$, and $\beta$.
	From \cref{rem:QMA and QPCP}, it follows that
	$\QMA \subseteq \PQPCP{q}{\alpha}{\beta}$.
	The other direction of the containment follows trivially
	since the \QMA{} verifier can read the whole proof.
\end{proof}

\Cref{conj:weak QPCP verification} asks whether the class \QMA{} also
has Pointer QPCPs with $q = \bigo{1}$ and $\alpha - \beta = \Omega(1)$.


\subsection{The Set Local Hamiltonian Problem}

We define a new \QMA{}-complete problem which is a generalization
of the Local Hamiltonian problem and which will lead to another version
of our conjecture.

\begin{definition}[Set Local Hamiltonian Problem]
  \label{def:SLH}
  The \emph{Set Local Hamiltonian} problem is denoted by \SLH{k}{a}{b}
  where $k \in \Zplus$ is called the locality and for $a,b \in \real$
  it holds that $a<b$.
  It is the following promise problem.
  Let $n$ be the number of the qubits of a quantum system, and $m$ and $l$ be two
  polynomials. The input for the problem are $m(n)$ sets of Hamiltonians.
  For all $i \in \Br{\fn{m}{n}}$ the set $\Hset{H}_i$ contains
  $l(n)$ Hamiltonians, i.e.,
  \[ \forall i \in \Br{\fn{m}{n}} :
  \Hset{H}_i = \cbr{H_{i,1}, \ldots, H_{i,{l(n)}}} . \]
  Each Hamiltonian is positive and has norm at most one, i.e.,
  $\forall i \in \Br{\fn{m}{n}}, \forall j \in \Br{\fn{l}{n}} :
  0 \leq H_{i,j} \leq \id $.
  Each Hamiltonian acts non-trivially on at most $k$ qubits
  out of the $n$ qubits of the quantum system.
  The problem is to decide which one of the following two conditions hold.
  \begin{itemize}
	\item In a YES instance, there exists a function
		$ f : \Br{\fn{m}{n}} \to \Br{\fn{l}{n}} $ and a state
		$\ket{\varphi} \in \complex^{2^n}$ such that
		\[ \bra{\varphi} \sum_{i = 1}^{m(n)} H_{i,f(i)} \ket{\varphi}
		\leq a \cdot m(n) . \]
	\item In a NO instance, for all functions
		$ f : \Br{\fn{m}{n}} \to \Br{\fn{l}{n}} $ and
		for all states $\ket{\varphi} \in \complex^{2^n}$, we have that
		\[ \bra{\varphi} \sum_{i = 1}^{m(n)} H_{i,f(i)} \ket{\varphi}
		\geq b \cdot m(n) . \]
  \end{itemize}
\end{definition}

\begin{lemma}
	\label{rem:SLH and QMA}
	The \SLH{k}{a}{b} problem is \QMA{}-complete for $k \geq 2$
	and $b - a \geq 1/\poly{n}$.
\end{lemma}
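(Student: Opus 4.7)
The proof splits into hardness and containment in a standard way, and neither half looks difficult; the statement should follow almost immediately from the definition and from Kempe--Kitaev--Regev.

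For QMA-hardness, I would observe that $\SLH{k}{a}{b}$ directly generalizes $\LH{k}{a}{b}$: every instance of Local Hamiltonian becomes an instance of Set Local Hamiltonian by letting each set $\Hset{H}_i$ be the singleton $\{H_i\}$ (so $l(n) = 1$ and the only function $f:[m(n)]\to[1]$ is constant). Hence the QMA-hardness of $\LH{k}{a}{b}$ for $k\geq 2$ and $b-a\geq 1/\poly{n}$, established in \cite{KempeKR06}, immediately transfers to $\SLH{k}{a}{b}$ with the same parameters.

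For containment in QMA, the plan is to let the prover send a two-part proof consisting of a classical description of a choice function $f:[m(n)]\to[l(n)]$ together with a quantum state $\ket{\varphi}\in\complex^{2^n}$. The verifier first measures the first register in the computational basis to obtain some concrete $f$ (by convexity, the prover gains nothing by sending a superposition over different $f$'s, so I would argue that without loss of generality the optimal cheating proof is of this product form). Then the verifier picks $i\in[m(n)]$ uniformly at random, locates the (at most $k$) qubits on which $H_{i,f(i)}$ acts non-trivially, and applies Kitaev's Hamiltonian-measurement gadget: since $0\leq H_{i,f(i)}\leq \id$, he can implement a two-outcome POVM $\{H_{i,f(i)},\id - H_{i,f(i)}\}$ and reject on the first outcome. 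The rejection probability of the protocol equals $\frac{1}{m(n)}\bra{\varphi}\sum_i H_{i,f(i)}\ket{\varphi}$, which is at most $a$ in YES instances (for the optimal $f$ and $\ket{\varphi}$) and at least $b$ in NO instances (for every $f$ and every $\ket{\varphi}$). Thus completeness and soundness are separated by $b-a\geq 1/\poly{n}$, and standard QMA amplification brings the gap to $2/3$ vs.\ $1/3$.

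The only step that requires any care is the reduction of an arbitrary cheating proof to a product form $\ket{f}\otimes\ket{\varphi}$; I would handle it by noting that measuring the classical register first projects the proof onto a mixture of product states $\sum_f p_f\,\ketbra{f}\otimes\kb{\varphi_f}$, so the overall acceptance probability is the convex combination $\sum_f p_f\cdot\Pr[\text{accept}\mid f,\ket{\varphi_f}]$ and is maximized by concentrating $p_f$ on a single $f^*$. Beyond this, everything is routine, and I expect no real obstacle; the lemma is essentially just the observation that the ``choice'' data $f$ is a classical, polynomial-length witness that can be absorbed into Merlin's message without affecting QMA membership.
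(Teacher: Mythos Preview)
Your proposal is correct and matches the paper's own proof almost exactly: both argue hardness by observing that \LH{k}{a}{b} is the special case $l(n)=1$, and both argue containment in \QMA{} by taking a witness with a classical description of $f$ together with the quantum state $\ket{\varphi}$ and running the standard Local-Hamiltonian energy test on $\sum_i H_{i,f(i)}$. If anything, your write-up is more detailed than the paper's (which is a two-sentence sketch); in particular your convexity argument for reducing to product-form witnesses is not spelled out there but is the right justification.
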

\begin{proof}
  For the containment $\SLH{k}{a}{b} \in \QMA$, let the witness have a classical
  part that contains the description of the function $f$ and a quantum part that is
  supposed to be the state \ket{\varphi}.  The quantum verifier can then apply
  the usual eigenvalue estimation on $\sum_{i = 1}^{m(n)} H_{i,f(i)}$.  The
  hardness of \SLH{k}{a}{b} comes trivially from the fact that Local Hamiltonian
  problem is a special case of the Set Local Hamiltonian problem with $\fn{l}{n} = 1$.
\end{proof}

Note that \cref{conj:weak QPCP constraint} asks whether the Set Local Hamiltonian problem remains \QMA-complete
when the locality is constant and the gap between $b$ and $a$ is also constant.


\subsection{CRESP Games}
  \label{sec:cresp}

  We now formally describe a new variant of quantum multi-prover games.
  These games are rather restricted but will allow us to state a third variant of our pointer QPCP conjecture.

  \subsubsection{Description of the Game}
  Let $n \in \Zplus$ be a parameter and $m$ be a polynomial.
  The size of the game will be polynomial in $n$.
  The game is played by one classical prover,
  $\ceil{\log(n+1)}$ quantum provers, and a verifier.
  It is played as follows.
  \begin{enumerate}
    \item \label{step:share the encoding} The quantum provers share the encoding of an arbitrary $n$-qubit state.
      (The encoding maps each qubit into a number of qudits and will be defined later.)
      They are not allowed to share any other resources.
    \item The verifier picks a question $i$ uniformly at random
      out of the $m(n)$ possible questions
      and sends the same question to all the provers (both quantum and classical).
    \item The classical prover replies with $\bigo{\log{n}}$ bits.
    \item Each quantum prover replies by at most $k$ qudits
     from their shared encoded state.
      All the quantum provers use the same strategy.
    \item The verifier accepts or rejects, based on his question and the answers from the provers.
  \end{enumerate}

  We denote these games by the acronym CRESP
  after the \textbf{C}lassical prover, the
  \textbf{R}estricted \textbf{E}ntanglement that the
  quantum provers can share and, since the
  only possible strategy the quantum provers can perform is to swap
  some of their qudits into the message register, we call them
  \textbf{S}wapping-\textbf{P}rovers.

  \begin{figure}[t b]
    \begin{center}
      \begin{tikzpicture}
    \matrix[table,ampersand replacement=\&] (mat11)
    {
      |[fill=red]| \&             \& |[fill=red]|\&              \& |[fill=red]|\&             \& |[fill=red]|\\
                   \& |[fill=red]|\& |[fill=red]|\&              \&             \& |[fill=red]|\& |[fill=red]|\\
                   \&             \&             \& |[fill=red]| \& |[fill=red]|\& |[fill=red]| \& |[fill=red]|            \\
    };

  \SlText{11-1-1}{qubit $1$}
  \SlText{11-1-2}{qubit $2$}
  \SlText{11-1-3}{qubit $3$}
  \SlText{11-1-4}{qubit $4$}
  \SlText{11-1-5}{qubit $5$}
  \SlText{11-1-6}{qubit $6$}
  \SlText{11-1-7}{qubit $7$}

  \CellText{11-1-1}{prover $1$};
  \CellText{11-2-1}{prover $2$};
  \CellText{11-3-1}{prover $3$};
      \end{tikzpicture}
    \end{center}
    \caption{A possible distribution of the encoding of $7$ qubits among $3$ provers. The red cells correspond to the GHZ-like entangled states, while the white cells to $\ket{0}$ states.}
    \label{fig:encoding}
  \end{figure}

  \subsubsection{Restriction on the Entanglement}
  The entangled state the provers share is of the following
  predefined form. First, the provers pick an arbitrary $n$-qubit state $\ket{\phi} \in \complex^{2^n}$.
  The state \ket{\phi} is encoded with a linear isometry
  $\Epsilon = \Epsilon_1 \otimes \Epsilon_2 \otimes \ldots \otimes \Epsilon_n$
  where each qubit of \ket{\phi} is encoded with $\Epsilon_i : \complex^2
  \to \bigotimes_{j=1}^{\ceil{\log(n+1)}} \calH_{i,j}$.
  For all $i$ and $j$, $\calH_{i,j} \cong \complex^4$, that is,
  $\calH_{i,j}$ is a four-dimensional space which we simply call qudit.
  To define $\Epsilon_i$, let's fix some ordering on the non-empty subsets of
  \Br{\ceil{\log(n+1)}}.
  Let $Q_i$ be the $i$-th subset, $\calS_i \eqdef \bigotimes_{j
  \in Q_i} \calH_{i,j}$, and $\overline{\calS_i} \eqdef
  \bigotimes_{j \notin Q_i} \calH_{i,j}$.
  For each $i \in \Br{n}$, we define $\Epsilon_i$ by giving its action
  on the standard basis states.
  \begin{align}
    \fn{\Epsilon_i}{\ket{0}} &\eqdef \frac{1}{\sqrt{2}} \left(\ket{0}^{\otimes \abs{Q_i}} +
  \ket{1}^{\otimes \abs{Q_i}}\right)_{\calS_i}
  \otimes \left(\ket{0}^{
  \otimes \ceil{\log(n+1)} - \abs{Q_i} }\right)_{\overline{\calS_i}}
  \label{eq:encoding on zero} \\
  \fn{\Epsilon_i}{\ket{1}} &\eqdef \frac{1}{\sqrt{2}} \left(\ket{2}^{\otimes \abs{Q_i}} +
      \ket{3}^{\otimes \abs{Q_i}}\right)_{\calS_i}
  \otimes \left(\ket{0}^{
  \otimes \ceil{\log(n+1)} - \abs{Q_i}}\right)_{\overline{\calS_i}}
  \label{eq:encoding on one}
  \end{align}

  We refer to the states in $\calS_i$ as GHZ-like states.
  After $\Epsilon$ is applied, prover $j$ receives the qudits that
  live in space $\bigotimes_{i=1}^{n} \calH_{i,j}$.
  A possible distribution of the qudits is depicted in \cref{fig:encoding}.

\subsubsection{Description of the CRESP Problem}
We are interested in the maximum acceptance probability the provers can
achieve, which is called the value of the game.
Here the maximum is taken over all legitimate shared states and
all legitimate provers' strategies.
We now define the promise problem that corresponds to the approximation
of the value of CRESP games.

\begin{definition}
	Let $k \in \Zplus$ and $\alpha, \beta \in \real$ with $\alpha > \beta$.
	Then, \CRESP{k}{\alpha}{\beta} is the following promise problem.
	The input is the description of a CRESP game defined above where the quantum provers
	answer at most $k$ qudits and the following conditions hold.
	\begin{itemize}
		\item In a YES instance the value of the game is at least $\alpha$.
		\item In a NO instance the value of the game is at most $\beta$.
	\end{itemize}
\end{definition}

We will prove that
the \CRESP{k}{\alpha}{\beta} problem is \QMA{}-complete for $k = \bigo{1}$
	and $\alpha - \beta \geq 1/\poly{n}$.
We defer this proof to \cref{sec:lh to game} as it
needs results that we will establish later.
Again, we note that \cref{conj:weak QPCP game} asks whether \CRESP{k}{\alpha}{\beta} remains \QMA-complete when
$k=O(1)$ and $\alpha-\beta=\Omega(1)$.


\section{Equivalence of Our QPCP Conjectures}
\label{sec:equivalence}
In this section we prove \cref{thm:main}, the equivalence of the three versions of our
Pointer QPCP conjecture.
The proof proceeds in the following three steps.
In \cref{sec:pcp to lh}, we show that if \cref{conj:weak QPCP verification}
is true then \cref{conj:weak QPCP constraint} is also true.
We do this by reducing any problem with a Pointer QPCP proof system
to the Set Local Hamiltonian problem.
In \cref{sec:lh to game} we show that if \cref{conj:weak QPCP constraint}
is true then \cref{conj:weak QPCP game} is also true by giving
a reduction from the Set Local Hamiltonian problem to our
decision problem involving CRESP games.
To complete the cycle, we prove in \cref{sec:game to pcp} that if
\cref{conj:weak QPCP game} is true then \cref{conj:weak QPCP verification}
is also true by giving a Pointer QPCP proof system for an arbitrary CRESP game.
See \cref{fig:arrows of implications} for the arrows of implications.


\subsection{From Pointer QPCP to the Set Local Hamiltonian Problem}
\label{sec:pcp to lh}

In this section we show that if \cref{conj:weak QPCP verification}
is true then \cref{conj:weak QPCP constraint} is also true.
We show that any problem $P \in \PQPCP{q}{\alpha}{\beta}$
is polynomial-time reducible to \SLH{q}{1-\alpha}{1-\beta}.
Assuming \cref{conj:weak QPCP verification}, this means that
the Set Local Hamiltonian problem is \QMA-hard.
The containment of the Set Local Hamiltonian problem
in \QMA{} is implied by \cref{rem:SLH and QMA}.

\begin{theorem}
  Any problem $P \in \PQPCP{q}{\alpha}{\beta}$ can be
  reduced to \SLH{q}{1-\alpha}{1-\beta} in polynomial time.
\end{theorem}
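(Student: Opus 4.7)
The plan is to convert the PQPCP verifier directly into a $\SLH{q}{1-\alpha}{1-\beta}$ instance in the most natural way: the classical-proof position $i \in [m(n)]$ plays the role of the Hamiltonian set index, and a value $v \in [l(n)]$ of $y_i$ plays the role of the choice $f(i)$. For each pair $(i,v)$, the verifier's behavior on input $x$, random coin $i$, and read symbol $y_i = v$ is a fixed POVM acting on at most $q$ of the $p(n)$ quantum-proof qubits. Letting $\Pi_{i,v}$ be its acceptance element, I would define
$$H_{i,v} \eqdef \id - \Pi_{i,v},$$
which is $q$-local and satisfies $0 \leq H_{i,v} \leq \id$. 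Grouping by $i$, the sets $\Hset{H}_i \eqdef \cbr{H_{i,1},\ldots,H_{i,l(n)}}$ form a valid $\SLH{q}{\cdot}{\cdot}$ instance on $p(n)$ qubits, computable from the verifier's circuit description in polynomial time.

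Next I would check completeness and soundness by matching configurations on both sides. If $x \in \ayes$, fix a Pointer QPCP proof $y_1 \ldots y_{m(n)} \otimes \ket{\psi}$ accepting with probability at least $\alpha$; since the verifier picks $i$ uniformly, the rejection probability is $\tfrac{1}{m(n)} \sum_i \bra{\psi} H_{i,y_i} \ket{\psi} \leq 1-\alpha$, so taking $f(i) \eqdef y_i$ and $\ket{\varphi} \eqdef \ket{\psi}$ yields
$$\bra{\varphi} \sum_{i=1}^{m(n)} H_{i,f(i)} \ket{\varphi} \leq (1-\alpha)\, m(n),$$
the YES side of $\SLH{q}{1-\alpha}{1-\beta}$. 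For soundness I would run the same correspondence backwards: any candidate $(f,\ket{\varphi})$ for the SLH instance induces a legal Pointer QPCP proof by $y_i \eqdef f(i)$, $\ket{\psi} \eqdef \ket{\varphi}$, whose acceptance probability equals $1 - \tfrac{1}{m(n)} \bra{\varphi} \sum_i H_{i,f(i)} \ket{\varphi}$. When $x \in \ano$ this quantity is at most $\beta$, forcing $\bra{\varphi} \sum_i H_{i,f(i)} \ket{\varphi} \geq (1-\beta)\, m(n)$ for every $(f,\ket{\varphi})$. The reduction thus preserves the gap exactly: $b - a = (1-\beta) - (1-\alpha) = \alpha - \beta$.

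The only delicate point — and it is minor — is that the Pointer QPCP definition in principle permits a mixed quantum proof and a distribution over classical strings, while the Set Local Hamiltonian problem quantifies over pure $\ket{\varphi}$ and deterministic $f$. Since the acceptance probability is linear both in the density matrix of $\ket{\psi}$ and in the distribution over $y_1 \ldots y_{m(n)}$, optimal PQPCP strategies may be taken pure and deterministic, and dually the infimum of $\bra{\varphi} \sum_i H_{i,f(i)} \ket{\varphi}$ is attained on pure $\ket{\varphi}$. I do not anticipate any further obstacle: the content of the argument is essentially that each ``random coin'' $i$ picks out a single set, and the classical symbol $y_i$ selects a single Hamiltonian from that set, which is exactly the semantics the Set Local Hamiltonian problem captures.
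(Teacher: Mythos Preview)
Your proposal is correct and follows essentially the same approach as the paper: define $H_{i,v}$ to be the rejection POVM element of the verifier on random index $i$ and classical symbol $v$, take $f(i)=y_i$ for completeness, and for soundness observe that any $(f,\ket{\varphi})$ yields a Pointer QPCP proof with acceptance probability $1-\tfrac{1}{m}\bra{\varphi}\sum_i H_{i,f(i)}\ket{\varphi}$. Your extra remark about pure/deterministic optima is a small bonus the paper does not spell out, but otherwise the arguments are the same.
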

\begin{proof}
  Let $y_1,\ldots,y_m$ be the classical part
  of the proof where $m=m(n)$ for a polynomial $m$ and $\ket{\psi}$ be the
  quantum part of the proof, which contains $p(n)$ qubits, for a polynomial $p$.
  Suppose that each $y_i$ can take $l=l(n)$ different values, for a polynomial $l$.
  We construct an instance of the Set Local Hamiltonian problem
  that consists of $m$ sets of Hamiltonians $\Hset{H}_i$, for $i \in \Br{m}$, where
  $\Hset{H}_i = \{H_{i,j}\}_{j \in [l]} $ and the Hamiltonians act on a
  $p(n)$-qubit system.
  Let $H_{i,j}$ be the rejection POVM element of the Pointer QPCP verifier
  over the constant number of qubits when he reads register $i$ from
  the classical part of the proof and it contains the value $j$, i.e., $j=y_i$.

  First we prove that if there is a proof that makes the Pointer QPCP verifier
  accept with probability greater than $\alpha$ then there is a function $f$
  such that the groundstate of $\sum_{i=1}^m H_{i,f(i)}$ has energy at most
  $(1-\alpha)m$.
  Let $y_1...y_m \otimes \ket{\psi}$ be such proof   and let $\alpha_i$ be the
  acceptance probability of the Pointer QPCP verifier when the verifier queries $i$. Since the verifier picks an  $i$ uniformly at random, it follows that $\frac{1}{m} \sum_i \alpha_i = \alpha$.
  Let $f(i) \eqdef y_i$.
  In this case, the energy of
  $\ket{\psi}$ on $\sum_{i} H_{i,f(i)}$ is
  \[\bra{\psi} \br{\sum_{i} H_{i,y_i}} \ket{\psi}
  \leq \sum_i (1 - \alpha_i) =  (1 - \alpha)m .\]

  For the other direction of the proof, suppose that there is a function $f$ and
  a state \ket{\psi} such that $\bra{\psi} \br{\sum_i H_{i,f(i)}} \ket{\psi}
  \leq \br{1 - \beta}m$.
  Then there is a proof that makes the Pointer QPCP verifier accept with probability
  bigger than $\beta$.
  Let $\br{f(1),f(2),...,f(m)} \otimes \ket{\psi}$ be the proof for the Pointer QPCP
  verifier.
  The acceptance probability of the Pointer QPCP
  verifier with this proof is
  \[	\frac{1}{m}  \sum_i (1 - \bra{\psi}H_{i,f(i)}\ket{\psi})
	= 1 - \frac{1}{m} \bra{\psi} \br{\sum_i H_{i,f(i)}} \ket{\psi}
	\geq \beta. \]
  This finishes the proof of the reduction.
\end{proof}


\subsection{From the Set Local Hamiltonian Problem to CRESP Games}
\label{sec:lh to game}
In this section we show that if \cref{conj:weak QPCP constraint}
is true then \cref{conj:weak QPCP game} is also true.
We do this by giving a reduction from the \SLH{k}{a}{b} problem
to the \CRESP{k}{1 - a/2}{1 - b/2} problem.
Assuming \cref{conj:weak QPCP constraint}, this implies that the
\CRESP{k}{1 - a/2}{1 - b/2} problem is \QMA-hard.
We prove the containment $\CRESP{k}{1 - a/2}{1 - b/2} \in \QMA$ in \cref{lemma:CRESPcomplete}.

We construct a CRESP game for the Set Local Hamiltonian problem.
The main idea in the construction is the following.
In our game, the verifier picks an index $i \in [m]$ uniformly at random
and sends $i$ to all the provers.
The classical prover tells the verifier the specific Hamiltonian
that should be taken from set $i$, i.e., the value of $f(i)$.
The quantum provers share the encoded groundstate of the Hamiltonian
$\sum_i H_{i,f(i)}$ and reply with the
encoding of the qubits that are involved in Hamiltonian $H_{i,f(i)}$.

First, the verifier checks if the
received qudits lie in the codespace of qubit $i$, and if not he rejects.
Using the definition of the encoding, the projector onto the
codespace is described by
\[
  (\Pi_i)_{\calS_i} \otimes \ketbra{0}_{\overline{\calS_i}}
\]
where
\[ \Pi_i =
\frac{1}{2} \br{\sum_{u,v \in \{0,1\}}
  \ket{u^{|Q_i|}}\bra{v^{|Q_i|}} + \sum_{w,z \in
\{2,3\}} \ket{w^{|Q_i|}}\bra{z^{|Q_i|}}}.
\]
Actually, we will see in
\cref{lemma:mixed answer} that it suffices for the verifier to perform only projection $(\Pi_i)_{\calS_i}$.
If the above test succeeds then the verifier picks a bit uniformly at random
and if it is $0$, he accepts.
Otherwise, the verifier decodes the answered qudits by inverting the mapping
$\Epsilon$, defined by
\cref{eq:encoding on zero,eq:encoding on one}, for all the qubits in Hamiltonian
$H_{i,f(i)}$. Then, he performs the measurement that corresponds to $H_{i,f(i)}$
on the decoded qubits and accepts or rejects based on the outcome.

If the Hamiltonian $\sum_i H_{i,f(i)}$ has an eigenstate with small eigenvalue then the provers will pass
the test with high probability. Using the fact that the provers share a state in the predefined encoding and
the restriction on the quantum provers' strategies, we also show that
the verifier will reject with high probability if all states
have high eigenvalues.
The description of the game is in \cref{algo:cresp}.

\begin{algorithm}[t b]
  \caption{CRESP Game for \SLH{k}{a}{b}}
  \begin{enumerate}[itemsep=0.1ex]
    \item The provers pick an $n$-qubit state \ket{\phi} and share its encoding
      \fn{\Epsilon}{\ket{\phi}}.  (In the honest case, \ket{\phi} is supposed to
      be the groundstate of Hamiltonian $H$.)
    \item The verifier picks $i \in \Br{m}$ uniformly at random and sends it to
      all the provers.
    \item The classical prover sends some $j \in \Br{l}$.
    \item Each quantum provers send $k$ qudits.
    \item The verifier performs the following tests.
    \begin{itemize}[itemsep=0.1ex,topsep=0ex]
    \item[$T_1$.] Measure the received qudits that correspond to the space
      $\calS_i$ with the projectors $\{\Pi_i, \Pi_i^\perp\}$ and reject
      if the outcome is $\Pi_i^\perp$. Otherwise, continue.
      \item[$T_2$.]  Pick $b \in \cbr{0,1}$ uniformly at random and accept if $b =
        0$. Otherwise, continue.
      \item[$T_3$.] Decode the received qudits and perform the measurement specified
        by $H_{i,f(i)}$  and accept or reject depending on the
      outcome.
    \end{itemize}
  \end{enumerate}
  \label{algo:cresp}
\end{algorithm}

\begin{theorem}
  \label{thm:hardness cresp}
The game defined by \cref{algo:cresp} has completeness $1 - a/2$ and soundness
$1 - b/2$.
\end{theorem}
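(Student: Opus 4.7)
The plan is to handle completeness and soundness separately, with completeness being a direct construction and soundness requiring a careful analysis of the restricted provers' strategies.

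For completeness, I would exhibit an honest prover strategy. Since we assume a YES instance of $\SLH{k}{a}{b}$, there exist a function $f^{*}: [m]\to[l]$ and a state $\ket{\varphi}$ with $\bra{\varphi}\sum_i H_{i,f^{*}(i)}\ket{\varphi} \leq a\cdot m$. The quantum provers share $\fn{\Epsilon}{\ket{\varphi}}$; on receiving question $i$, the classical prover answers $f^{*}(i)$ and every quantum prover swaps into the message register the qudits associated with the $k$ qubits on which $H_{i,f^{*}(i)}$ acts. By construction of $\Epsilon$, test $T_1$ passes with probability one (each encoded qubit lies in the codespace preserved by $\Pi_i$); test $T_2$ accepts with probability $1/2$ via the coin flip; and when $T_3$ runs, inverting $\Epsilon$ on the forwarded qudits recovers the reduced density matrix of $\ket{\varphi}$ on the relevant $k$ qubits, so the Hamiltonian measurement rejects with probability $\bra{\varphi}H_{i,f^{*}(i)}\ket{\varphi}$. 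Averaging over $i\in[m]$ yields total acceptance probability $1 - \frac{1}{2m}\bra{\varphi}\sum_i H_{i,f^{*}(i)}\ket{\varphi} \geq 1 - a/2$.

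For soundness, I would begin by parameterizing the most general cheating strategy. The restricted-entanglement rule forces the shared state to be of the form $\fn{\Epsilon}{\ket{\phi}}$ for some $\ket{\phi}\in\complex^{2^n}$; the classical prover is given by an arbitrary function $f:[m]\to[l]$; and each quantum prover's strategy, by the swap restriction, reduces to a choice (per question $i$) of which $k$ qudits to forward from their share. I would then invoke \cref{lemma:mixed answer} to reduce $T_1$ to the projection $(\Pi_i)_{\calS_i}$ only, and use this to argue that the combinatorial structure of the $Q_t$ subsets forces any strategy that passes $T_1$ with nonzero probability to effectively deliver the correctly-addressed encoded qudits of some $k$-qubit reduced state of $\ket{\phi}$. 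In that case, $T_3$ contributes a rejection probability of exactly $\frac{1}{2}\bra{\phi}H_{i,f(i)}\ket{\phi}$, with any deviation from the honest answer only boosting rejection through $T_1$. Averaging over $i$ and using the NO-instance guarantee $\bra{\phi}\sum_i H_{i,f(i)}\ket{\phi} \geq b\cdot m$ then bounds the overall acceptance probability by $1 - b/2$.

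The main obstacle is precisely the soundness step of showing that the swap restriction together with $T_1$ collapses an arbitrary strategy to an honest one on some $k$-qubit subset of $\ket{\phi}$. Cheating provers could attempt to swap in qudits belonging to qubits other than those the verifier intends to decode, and one must verify that the GHZ-like encoding forces such misaligned answers either to fail the $\Pi_i$ projection or to be indistinguishable from an honest answer on a different $k$-tuple of qubits. Making this rigorous is where \cref{lemma:mixed answer} is essential: it lets one ignore the $\overline{\calS_i}$ components of the forwarded qudits and focus the analysis on the GHZ-structured part, after which linearity of the encoding and the positivity of the Hamiltonians reduce the soundness inequality to the single Hamiltonian expectation bound from the NO-instance promise.
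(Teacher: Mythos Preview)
Your completeness argument is correct and essentially identical to the paper's.

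Your soundness outline has a real gap. You claim that \cref{lemma:mixed answer} lets you argue that ``any strategy that passes $T_1$ with nonzero probability \ldots\ effectively deliver[s] the correctly-addressed encoded qudits of some $k$-qubit reduced state of $\ket{\phi}$.'' This is not what the lemma says, and it is false. A dishonest swap (sending the encoding of qubit $j\neq i$) passes $T_1$ with probability up to $1/2$, and the post-projection state need not coincide with any honest encoding; moreover, even if it did, the verifier still decodes and measures with $H_{i,f(i)}$, not with a Hamiltonian matched to whatever qubits were actually sent. So there is no a priori lower bound on the $T_3$ rejection probability conditioned on a dishonest answer surviving $T_1$. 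Your fallback alternative (``or be indistinguishable from an honest answer on a different $k$-tuple'') therefore does not help.

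What you are missing is the role of the coin flip $T_2$. The paper's soundness proof does not try to control $T_3$ on cheating branches at all; instead it runs a question-by-question hybrid comparing an honest answer on question $r$ to a dishonest one. Honest gives acceptance $\tfrac{1}{2}+\tfrac{1}{2}\cprob{T_3}{\cdot}\geq \tfrac{1}{2}$ on that question (the $\tfrac12$ coming entirely from $T_2$). Dishonest, by \cref{lemma:mixed answer}, survives $T_1$ with probability at most $\tfrac{1}{2}$, and even granting $\cprob{T_3}{\cdot}=1$ afterward yields at most $\tfrac{1}{2}\bigl(\tfrac{1}{2}+\tfrac{1}{2}\bigr)=\tfrac{1}{2}$. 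Hence honest dominates on every question, and the bound $1-b/2$ follows from the honest-strategy calculation. Your write-up should replace the ``$T_1$ forces correctness'' claim with this numerical comparison; without the $T_2$ buffer the argument does not close.
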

\begin{proof}
	\Cref{lem:cresp completeness} proves completeness
	while \cref{lem:cresp soundness} proves soundness.
\end{proof}

\begin{lemma}[Completeness]
	\label{lem:cresp completeness}
  If there is a function $f$ such that the groundstate of $\sum_i
  H_{i,f(i)}$ has eigenvalue at most $a m$ then the maximum
  acceptance probability of the game is at least $1 - a/2$.
\end{lemma}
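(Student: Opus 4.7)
The plan is to exhibit the obvious honest strategy and tally the probability that each of the three tests accepts. Let $f$ be the function from the hypothesis and let $\ket{\varphi}$ be the groundstate of $H \eqdef \sum_{i=1}^{m} H_{i,f(i)}$. The quantum provers would share the encoded state $\fn{\Epsilon}{\ket{\varphi}}$; on question $i$ the classical prover would answer $j = f(i)$, and each quantum prover would return its $k$ qudits corresponding to the qubits on which $H_{i,f(i)}$ acts nontrivially.

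First I would observe that $T_1$ always accepts, because $\fn{\Epsilon}{\ket{\varphi}}$ lies in the codespace by construction. Second, $T_2$ accepts with probability exactly $1/2$ by definition. Third, I would analyze $T_3$ conditioned on reaching it: the verifier reassembles the $\ceil{\log(n+1)}$ qudits per qubit (one from each quantum prover) and inverts $\Epsilon_{i'}$ on each of the $k$ involved blocks. Because $\Epsilon$ is a tensor product across qubits, the decoding produces the reduced state of $\ket{\varphi}$ on these $k$ qubits, and since $H_{i,f(i)}$ is supported on exactly those qubits, the measurement $\cbr{H_{i,f(i)}, \id - H_{i,f(i)}}$ accepts with probability $1 - \bra{\varphi} H_{i,f(i)} \ket{\varphi}$.

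Putting the three tests together, the acceptance probability conditioned on question $i$ is
\[ \frac{1}{2} + \frac{1}{2}\left(1 - \bra{\varphi} H_{i,f(i)} \ket{\varphi}\right) = 1 - \frac{1}{2}\bra{\varphi} H_{i,f(i)} \ket{\varphi}, \]
and averaging over the uniform choice of $i \in \Br{m}$ together with $\bra{\varphi} H \ket{\varphi} \leq a m$ yields
\[ \Pr[\text{accept}] = 1 - \frac{1}{2m} \bra{\varphi} H \ket{\varphi} \geq 1 - \frac{a}{2}. \]

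I do not expect any serious obstacle here; completeness is a direct accounting of the honest strategy once it is identified, and the interesting difficulty of the construction is concentrated in the soundness direction of \cref{thm:hardness cresp}. The only minor point I would want to sanity-check is that the tensor-product structure of $\Epsilon$ across qubit positions lets the verifier decode each of the $k$ returned blocks independently, so that tracing out the qudits the provers never send really does leave the correct reduced state of $\ket{\varphi}$ on the $k$ qubits relevant to $H_{i,f(i)}$.
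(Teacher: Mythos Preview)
Your proposal is correct and matches the paper's proof essentially step for step: the same honest strategy (share $\fn{\Epsilon}{\ket{\varphi}}$, classical prover answers $f(i)$, quantum provers send their shares of the relevant encoded qubits), the same observation that $T_1$ always passes, and the same final calculation $\frac{1}{2} + \frac{1}{2}\br{1 - \frac{1}{m}\sum_i \bra{\varphi} H_{i,f(i)} \ket{\varphi}} \geq 1 - a/2$. The extra remarks you make about the tensor-product structure of $\Epsilon$ justifying the decoding are implicit in the paper and are a welcome clarification, not a deviation.
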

\begin{proof}
  Let the quantum provers share $\Epsilon(\ket{\psi})$, the encoding of the
  groundstate $\ket{\psi}$ of $H \eqdef \sum_i  H_{i,f(i)}$.
  When the verifier queries $i$, the classical prover answers $f(i)$ and
  all quantum provers honestly reply with their shares of the encodings
  of the $k$ qubits corresponding to $H_{i,f(i)}$.
  The verifier always measures $\Pi_i$ and so he accepts with probability
  \begin{align*}
    \frac{1}{2}+ \frac{1}{2} \left( 1- \frac{1}{m} \sum_{i=1}^m \bra{\psi}H_{i,f(i)}\ket{\psi} \right)
    &=1 -  \frac{1}{2m} \bra{\psi}H\ket{\psi} \geq 1 -  \frac{a}{2}. \myqedhere
  \end{align*}
\end{proof}

The following technical lemma is the key to prove soundness.
It establishes that when the provers reply with the qudits that belong to the
encoding of a different qubit, the verifier will detect it with
probability at least half. We defer the full proof of this lemma to \cref{ap:mixed answer}.

\begin{lemma}
	\label{lemma:mixed answer}
  If the provers are asked for the encoding of qubit $i$ and they answer with
  the qudits that correspond to the encoding of a different qubit, then the
  answered state projects to the correct codespace, i.e., the subspace that
  corresponds to the projector $\Pi_i$, with probability at most $1/2$.
\end{lemma}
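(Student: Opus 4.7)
The plan is to carry out an explicit computation of $\Tr\!\Br{(\Pi_i)_{\calS_i}\rho}$, where $\rho$ is the state the verifier receives after the quantum provers cheat by sending qudits from the encoding of qubit $i'\neq i$. Since the provers share $\fn{\Epsilon}{\ket{\phi}}$ for some $n$-qubit state $\ket{\phi}$ and their only legal strategy is to swap one of their qudits into the message register, the state that reaches the verifier is exactly the marginal of $\fn{\Epsilon}{\ket{\phi}}$ on $\bigotimes_j \calH_{i',j}$. Because $\Epsilon$ is a tensor product of isometries, this marginal has the form $\Epsilon_{i'}(\rho_{i'})\Epsilon_{i'}^{*}$ for $\rho_{i'}$ the reduced one-qubit density matrix of $\ket{\phi}$ on qubit $i'$, and in particular lies in the $i'$-th code subspace spanned by $\ket{0_L}\eqdef\fn{\Epsilon_{i'}}{\ket{0}}$ and $\ket{1_L}\eqdef\fn{\Epsilon_{i'}}{\ket{1}}$.

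Next, I would exploit the block structure of both the codewords and the projector. By \cref{eq:encoding on zero,eq:encoding on one}, $\ket{0_L}$ is supported on computational basis vectors whose entries lie in $\cbr{0,1}$, while $\ket{1_L}$ is supported on entries in $\cbr{2,3}$. The projector $\Pi_i$ likewise splits as $\ketbra{\psi_+}+\ketbra{\phi_+}$ with $\ket{\psi_+}$ supported in $\cbr{0,1}^{\abs{Q_i}}$ and $\ket{\phi_+}$ in $\cbr{2,3}^{\abs{Q_i}}$. Hence the cross-terms of $\Epsilon_{i'}(\rho_{i'})$ vanish against $\Pi_i$, and one obtains
\[
  \Tr\!\Br{(\Pi_i)_{\calS_i}\,\Epsilon_{i'}(\rho_{i'})\Epsilon_{i'}^{*}}
  \;=\;(\rho_{i'})_{00}\,p_0\;+\;(\rho_{i'})_{11}\,p_1,
\]
where $p_u$ is the acceptance probability of the codeword $\ket{u_L}$. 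Since $(\rho_{i'})_{00}+(\rho_{i'})_{11}=1$, the lemma reduces to showing $p_0,p_1\leq \half$.

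Then I would do the case analysis with the parameters $a\eqdef\abs{Q_i\cap Q_{i'}}$, $b\eqdef\abs{Q_i\setminus Q_{i'}}$, $c\eqdef\abs{Q_{i'}\setminus Q_i}$, noting that the ordering on subsets forces $Q_i\neq Q_{i'}$, i.e.\ $b+c\geq 1$. If $Q_i\subsetneq Q_{i'}$ (so $b=0$, $c\geq 1$), tracing the $c$ extra GHZ-qudits collapses the marginal on $\calS_i$ into the mixture $\half(\ketbra{0^a}+\ketbra{1^a})$, each term having overlap $\half$ with $\ket{\psi_+}$, giving $p_0=\half$; symmetrically $p_1=\half$. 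In the remaining cases one has $b\geq 1$, and the marginal on $\calS_i$ (pure if $c=0$, mixed if $c\geq 1$) always contains a factor $\ket{0^b}$ on the qudits in $Q_i\setminus Q_{i'}$; computational basis states of the form $\ket{1^a 0^b}$ are orthogonal to $\ket{1^{a+b}}$, so only the $\ket{0^{a+b}}$ component contributes, yielding $p_0=\tfrac14\leq\half$, and likewise $p_1\leq\half$.

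The main potential pitfall is the off-diagonal $\ketbra{0_L}{1_L}$-contribution when $c=0$, because there the cross-terms do survive inside the marginal on $\calS_i$; the observation that $\Pi_i$ block-diagonalizes the $\cbr{0,1}$ and $\cbr{2,3}$ sectors is what makes them still drop out against the projector, reducing everything to the two codeword-wise bounds. Beyond this point the proof is a bookkeeping exercise on inner products of computational basis states and the two Bell-like vectors $\ket{\psi_+},\ket{\phi_+}$.
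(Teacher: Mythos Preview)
Your proposal is correct and follows essentially the same route as the paper's proof: both split on whether $Q_i\subset Q_{i'}$ (your $b=0$) or $Q_i\not\subset Q_{i'}$ (your $b\geq 1$), use that tracing out extra GHZ qudits collapses the $\calS_i$-marginal to a classical mixture of computational basis states in the first case, and that the $\ket{0^b}$ tensor factor kills all but the $\ket{0^{\abs{Q_i}}}$ term of $\Pi_i$ in the second. Your extra preliminary step of reducing to the two logical codewords via the $\cbr{0,1}/\cbr{2,3}$ block structure of $\Pi_i$ is a mild reorganization of the same computation; the paper instead keeps the mixed state (with the $\abs{\alpha}^2,\abs{\beta}^2$ weights) throughout.

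One minor slip: your claim $p_0=\tfrac14$ in the $b\geq 1$ case is not correct when $a=0$ (i.e.\ $Q_i\cap Q_{i'}=\emptyset$, which can occur once $\ceil{\log(n+1)}\geq 2$). There the $\calS_i$-marginal of $\ket{0_L}$ is just $\ketbra{0^b}$, and $p_0=\norm{\Pi_i\ket{0^b}}^2=\tfrac12$, not $\tfrac14$. This does not affect the conclusion since you only need $p_0\leq\tfrac12$.
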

\begin{proofsketch}
  Since the provers have the same strategy for a fixed question and they can only
  do swaps, the only cheating strategy for the provers is to answer the encoding of
  a qubit which is different from the one the verifier asked for. In this case, by the properties of the
  chosen encoding, the state that should be a GHZ-like state
  is actually separable and it projects to the correct codespace with probability at most half.
\end{proofsketch}

We prove now the soundness property of the game.

\begin{lemma}[Soundness]
	\label{lem:cresp soundness}
  If, for all functions $f$, the groundstate of
  $\sum_i H_{i,f(i)}$ has eigenvalue at least $b m$ then the maximum
  acceptance probability of the game is at most $1 - b/2$.
\end{lemma}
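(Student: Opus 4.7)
The plan is to fix an arbitrary prover strategy, upper bound the acceptance probability question by question, and then apply the no-instance hypothesis to the classical prover's response function. A strategy consists of a shared encoded state $\Epsilon(\ket{\phi})$, a classical answering function $g \colon [m] \to [l]$, and, because all quantum provers must use the same swap strategy, a map $s$ assigning to each question $i \in [m]$ a set $s(i) \subseteq [n]$ of $k$ qubits whose encoded qudits are sent. By convexity I may assume $\ket{\phi}$ is pure and the choices of $g$ and $s$ are deterministic.

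Fix a question $i$ and let $Q(i,g(i)) \subseteq [n]$ denote the $k$ qubit indices on which $H_{i,g(i)}$ acts. I would split into two cases. In the honest case $s(i) = Q(i,g(i))$: the received qudits lie exactly in the codespace, so $T_1$ passes with probability one; decoding in $T_3$ applies $\Epsilon^{-1}$ on the correct qubits and returns the reduced state of $\ket{\phi}$ there, so the accept POVM $\id - H_{i,g(i)}$ triggers with probability $1 - \bra{\phi}H_{i,g(i)}\ket{\phi}$. The overall acceptance for this question is then $\tfrac{1}{2}\br{1 + (1 - \bra{\phi}H_{i,g(i)}\ket{\phi})} = 1 - \tfrac{1}{2}\bra{\phi}H_{i,g(i)}\ket{\phi}$. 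In the dishonest case $s(i) \ne Q(i,g(i))$: at least one answered slot carries the encoding of the wrong qubit, so \cref{lemma:mixed answer} bounds the $T_1$-success probability by $1/2$, and hence the total acceptance on this question is at most $1/2$; since $H_{i,g(i)} \leq \id$, this is also at most $1 - \tfrac{1}{2}\bra{\phi}H_{i,g(i)}\ket{\phi}$.

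Since the bound $1 - \tfrac{1}{2}\bra{\phi}H_{i,g(i)}\ket{\phi}$ is valid in both cases, averaging over the uniform choice of $i$ gives
\[ \Pr[\text{accept}] \;\leq\; \frac{1}{m}\sum_{i=1}^{m}\br{1 - \tfrac{1}{2}\bra{\phi}H_{i,g(i)}\ket{\phi}} \;=\; 1 - \frac{1}{2m}\bra{\phi}\sum_{i=1}^{m} H_{i,g(i)}\ket{\phi} \;\leq\; 1 - \frac{b}{2}, \]
where the last inequality invokes the no-instance hypothesis on the specific function $f \eqdef g$, which forces $\bra{\phi}\sum_i H_{i,g(i)}\ket{\phi} \geq bm$ for every $\ket{\phi}$.

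The main obstacle is justifying the reduction of the quantum provers' freedom to a single deterministic set $s(i)$ per question. Because the shared state is the prescribed encoding, the only permitted operation is a swap into the message register, and all quantum provers must implement the same strategy, the joint answer on question $i$ is determined by a common choice of $k$ qudit positions; any classical randomization can be averaged out, placing us exactly in the regime where \cref{lemma:mixed answer} yields the crisp $1/2$ bound on $T_1$. Pinning this down, together with reading $T_1$ as rejecting whenever any of the $k$ qubits of $H_{i,g(i)}$ fails the codespace projection, is the only delicate point; once that is in place the per-question bound and the averaging above close the argument.
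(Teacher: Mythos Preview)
Your proof is correct and follows essentially the same approach as the paper: both arguments fix the classical prover's function, then use \cref{lemma:mixed answer} to bound the $T_1$-success probability by $1/2$ whenever the quantum provers swap in the encoding of a wrong qubit, and compare this to the honest acceptance $1-\tfrac{1}{2}\bra{\phi}H_{i,g(i)}\ket{\phi}$. The only structural difference is that the paper packages the comparison as a hybrid argument (showing the honest strategy dominates any dishonest one question by question), whereas you directly upper bound each question's acceptance by $1-\tfrac{1}{2}\bra{\phi}H_{i,g(i)}\ket{\phi}$ in both cases and then average; your version is slightly more direct but the content is the same.
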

\begin{proof}
	Let's fix an arbitrary strategy for the classical prover.
	We define the function $f$ by letting $f(i)$ be the answer from
	the classical prover to question $i$.
	Let us also fix the shared state of the quantum provers arbitrarily.
	The strategy of the quantum provers is still left undefined.

	Let $A$ be
	the event that the verifier accepts, $E_i$ be the event that the verifier
	picks question $i$, $T_1$ be the event that the verifier continues after the test
  $T_1$, i.e., the answered qudits were projected onto the codespace, $T_2$ be the
  event that the verifier accepts during test $T_2$, i.e., he picks $b=0$,
  and
	$T_3$ be the event that the verifier accepts on test $T_3$, i.e., in the estimation of
  the energy of the Hamiltonian.
	Let us fix a strategy where the quantum provers answer honestly, i.e.,
	where $\cprob{T_1}{E_i} = 1$.
  The probability of acceptance with this strategy is
	\begin{align*}
		\prob{A} &= \sum_{i=1}^m \prob{E_i} \cdot \cprob{T_1}{E_i}
			\br{\prob{T_2} + \prob{\bar{T_2}} \cdot \cprob{T_3}{E_i,T_1,\bar{T_2}}} \\
		&= \sum_{i=1}^m \frac{1}{m} \br{\frac{1}{2} + \frac{1}{2}
			\cdot \cprob{T_3}{E_i,T_1,\bar{T_2}}} \\
		&\leq \frac{1}{2} + \frac{1}{2} \br{1 - b} \\
		&= 1 - \frac{b}{2}
	\end{align*}
	where the inequality follows from the fact that all the eigenstates of
	$\sum_i H_{i,f(i)}$ have eigenvalue at least $b m$.

  Let $G$ and $B$ be two strategies which are the same except
  when question $r$ is asked for a fixed $r$.
  When asked $r$, the provers following $G$ answer honestly while provers
  following $B$ answer with the encoding of a different qubit.

  We extend the previous notation by adding the superscript of the corresponding
  strategy for the events, e.g., $T_1^G$ is the event that the answered qudits are in
	the correct codespace when the provers follow strategy $G$.
	The following calculation shows that $G$ has higher success probability.
	\begin{align*}
		\prob{A^G} - \prob{A^B} &= \prob{E_k^G} \cdot \cprob{T_1^G}{E_k^G}
			\br{\prob{T_2^G} + \prob{\bar{T_2}^G} \cdot \cprob{T_3^G}{E_k^G,T_1^G,\bar{T_2}^G}} \\
		& \qquad {} - \prob{E_k^B} \cdot \cprob{T_1^B}{E_k^B}
			\br{\prob{T_2^B} + \prob{\bar{T_2}^B} \cdot \cprob{T_3^B}{E_k^B,T_1^B,\bar{T_2}^B}} \\
		&\geq \frac{1}{m} \br{1 \cdot \br{\frac{1}{2} + \frac{1}{2}
			\cdot \cprob{T_3^G}{E_k^G,T_1^G,\bar{T_2}^G}} \right. \\
		& \qquad {} \left. - \half \cdot \br{\frac{1}{2} + \frac{1}{2}
			\cdot \cprob{T_3^B}{E_k^B,T_1^B,\bar{T_2}^B}}} \\
		&\geq \frac{1}{m} \br{\frac{1}{2} - \half} \\
		&= 0
	\end{align*}
	where in the first inequality we used that $\cprob{T_2^B}{E_k^B} \leq 1/2$,
	by \cref{lemma:mixed answer}.
	By a hybrid argument, it is easy to see that the strategy where provers are honest is optimal and so the soundness follows.
\end{proof}

	We now show that even though our game seems very restricted,
	it is in fact \QMA{}-hard to approximate its value to within
	an inverse-polynomial precision.

\begin{theorem}\label{lemma:CRESPcomplete}
The \CRESP{k}{\alpha}{\beta} problem is \QMA{}-complete for $k = \bigo{1}$
	and $\alpha - \beta \geq 1/\poly{n}$.
\end{theorem}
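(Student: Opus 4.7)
The plan is to establish the two directions of $\QMA$-completeness separately: hardness will follow by composing previously proved reductions, and containment will use the very restricted structure of CRESP strategies.

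For hardness, I would compose \cref{rem:SLH and QMA} with \cref{thm:hardness cresp}. Starting from an arbitrary problem in $\QMA$, first reduce it in polynomial time to an instance of $\SLH{k}{a}{b}$ with a fixed constant locality $k \geq 2$ and inverse-polynomial gap $b - a \geq 1/p(n)$ for some polynomial $p$. Feeding this SLH instance into \cref{algo:cresp} yields, by \cref{thm:hardness cresp}, a CRESP game with the same constant $k$, completeness $\alpha = 1 - a/2$, and soundness $\beta = 1 - b/2$. The resulting gap is $\alpha - \beta = (b - a)/2 \geq 1/(2p(n))$, which is still inverse polynomial. Composing these polynomial-time reductions gives $\QMA$-hardness of $\CRESP{k}{\alpha}{\beta}$ for the claimed range of parameters.

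For the containment $\CRESP{k}{\alpha}{\beta} \subseteq \QMA$, I would design a $\QMA$ verifier that simulates one round of the CRESP game on a witness that encodes an optimal prover strategy. The crucial feature of the CRESP model is that any legitimate strategy decomposes into three pieces: an $n$-qubit state $\ket{\phi}$ that is subsequently encoded via $\Epsilon$; the classical prover's strategy, which is simply a function $[m] \to [l]$; and each quantum prover's strategy, which (because only swaps are allowed and there is no extra shared entanglement) is just a question-dependent choice of at most $k$ qudits to move into the answer register. All components except $\ket{\phi}$ admit a polynomial-size classical description. The $\QMA$ witness is therefore $\ket{\phi}$ together with these classical descriptions. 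The verifier parses the classical part, applies $\Epsilon$ to $\ket{\phi}$, samples $i \in [m]$ uniformly, reads off the classical answer $f(i)$, selects the qudits prescribed by the swap strategy on input $i$, and runs the CRESP game's acceptance check verbatim. Because the simulation reproduces the behavior of the game verifier on the same encoded state and the same answers, the $\QMA$ acceptance probability on an optimal witness coincides with the value of the game, so completeness and soundness are preserved with gap $\alpha - \beta \geq 1/\poly{n}$; standard $\QMA$ gap amplification then places the problem in $\QMA$.

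The main thing to be careful about, rather than a true obstacle, is making sure the witness captures every CRESP strategy. This is forced by the definition in \cref{sec:cresp}: the pre-agreed encoding $\Epsilon$ is fixed, the shared state is restricted to the image of $\Epsilon$, the classical prover's message is $\bigo{\log n}$ bits, and the quantum provers can only perform swaps, so any strategy is fully determined by the data $(\ket{\phi}, f, \text{swap rule})$ described above. Once this structural point is unpacked, both directions reduce to routine bookkeeping.
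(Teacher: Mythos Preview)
Your proposal is correct and follows essentially the same approach as the paper: hardness is obtained by composing \cref{rem:SLH and QMA} with \cref{thm:hardness cresp}, and containment in \QMA{} is shown by taking as witness the pre-encoding state $\ket{\phi}$ together with a classical description of the provers' (swap) strategies, then having the verifier apply $\Epsilon$ and simulate the game. The paper's own proof is terser but makes exactly the same moves.
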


\begin{proof}
  The containment in \QMA{} is simple:
  The \QMA{} proof is the state the provers choose before the encoding
  together with the classical information that
  describes the behavior of all the provers.
  Then the \QMA{} verifier can create the encoding and simulate
  the game.
  This leads to the same acceptance probability as that of the game
  which means that there is an inverse-polynomial gap between completeness
  and soundness in the \QMA{} protocol.

  The \QMA-hardness follows from \cref{rem:SLH and QMA,thm:hardness cresp}.
\end{proof}


\subsection{From CRESP Games to Pointer QPCPs}
\label{sec:game to pcp}
In this section we show that if \cref{conj:weak QPCP game}
is true then \cref{conj:weak QPCP verification} is also true.
We do this by proving that $\CRESP{k}{\alpha}{\beta} \in
\PQPCP{k}{\alpha}{\beta}$.
Assuming \cref{conj:weak QPCP game}, this implies that
$\QMA \subseteq \PQPCP{k}{\alpha}{\beta}$.
The inclusion $\PQPCP{k}{\alpha}{\beta} \subseteq \QMA$
follows trivially, the same way as in \cref{rem:QMA and PQPCP}.

\begin{theorem}
	$\CRESP{k}{\alpha}{\beta} \in \PQPCP{k}{\alpha}{\beta}$.
\end{theorem}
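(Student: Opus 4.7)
The plan is to simulate the CRESP game inside a Pointer QPCP, by encoding the deterministic strategies of the provers into the classical part of the proof and placing the shared state of the quantum provers into the quantum part. Concretely, given a CRESP instance with $m = m(n)$ questions, I would define the classical part to consist of $m$ blocks $y_1, \ldots, y_m$, where each $y_i$ jointly encodes (i) the classical prover's answer $c_i \in [l(n)]$ to question $i$ and (ii) the $k$-element subset $S_i \subseteq [n]$ of qubit indices specifying the common swap strategy used by every quantum prover on that question. Since $k$ and $l(n)$ are polynomially bounded, each $y_i$ fits in $O(\log n)$ bits, consistent with \cref{def:PQPCP}. The quantum part $\ket{\psi}$ is an $n \ceil{\log(n+1)}$-qudit state, intended to be the encoded shared state $\Epsilon(\ket{\phi})$.

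The Pointer QPCP verifier samples $i \in [m]$ uniformly at random, reads $y_i$, accesses the $k \ceil{\log(n+1)}$ qudits at positions $\{(j, l) : j \in S_i,\ l \in [\ceil{\log(n+1)}]\}$, and then runs the three tests $T_1, T_2, T_3$ of \cref{algo:cresp} verbatim on these qudits, using $c_i$ as the classical prover's answer. Completeness is immediate: given a yes instance, take an optimal CRESP strategy with value at least $\alpha$, set $\ket{\psi}$ to its encoded shared state, and record the optimal answers in the $y_i$'s; the Pointer QPCP acceptance probability then equals the game's value, which is at least $\alpha$.

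The step I expect to be the main obstacle is soundness, since the Pointer QPCP admits arbitrary quantum proofs $\ket{\psi}$, whereas the CRESP game restricts the shared state to the image of the encoding isometry $\Epsilon$. To dispose of this discrepancy, I would observe that for a fixed question $i$ and fixed strategy $(c_i, S_i)$, the verifier's acceptance operator on the received qudits takes the form $\frac{1}{2} \bigotimes_{j \in S_i} \Pi_j + \frac{1}{2}\, \Epsilon\, (I - H_{i, c_i})\, \Epsilon^{\dagger}$, which is supported inside the codespace associated with the qubits in $S_i$. Hence any component of $\ket{\psi}$ lying outside this codespace is either annihilated by $T_1$ or does not contribute to the decoded Hamiltonian test $T_3$, so the maximum acceptance probability is achieved by (marginal) states in the codespace, that is, by encoded states. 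The soundness of the Pointer QPCP then follows from \cref{lem:cresp soundness}, bounding the acceptance on no instances by $\beta$.
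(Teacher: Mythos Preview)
Your construction has a genuine gap in the query count. By placing the \emph{encoded} state $\Epsilon(\ket{\phi})$ into the quantum part of the proof and having the Pointer QPCP verifier read all $k\ceil{\log(n+1)}$ qudits corresponding to the selected qubit positions, the verifier touches $\Theta(k\log n)$ qubits of the proof, not $k$. This only yields $\CRESP{k}{\alpha}{\beta}\in\PQPCP{\Theta(k\log n)}{\alpha}{\beta}$, which is strictly weaker than the claimed containment. The paper avoids this by putting the \emph{unencoded} $n$-qubit state $\ket{\phi}$ into the quantum proof: the verifier reads only the $k$ raw qubits named by $s_i$ and then applies the encoding $\Epsilon$ to those $k$ qubits himself before simulating the game verifier. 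That is why the paper can write ``we only read out $k$ qubits from the quantum part of the proof.''

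This same design choice is what makes soundness immediate in the paper and is precisely where your soundness argument goes wrong. You argue soundness by appealing to the particular acceptance operator of \cref{algo:cresp} and to \cref{lem:cresp soundness}, but the statement concerns \emph{arbitrary} CRESP instances, whose verifiers need not perform the tests $T_1,T_2,T_3$ or involve any Hamiltonian $H_{i,c_i}$; your codespace-support observation therefore does not apply in general. In the paper's construction no such argument is needed: whatever state $\ket{\psi}$ a cheating prover supplies, the verifier's own application of $\Epsilon$ produces a legal CRESP shared state $\Epsilon(\ket{\psi})$, and together with the classical answers $(s_i,c_i)$ this yields a valid CRESP strategy with exactly the same acceptance probability. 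Soundness then follows directly from the assumption that the game value is at most $\beta$.
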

\begin{proof}
  In CRESP games, the strategy of the
  quantum provers consists of the choice of the shared state and the choice of
  which qudits to answer for each one of the verifier's questions.
  For the classical prover, the
  strategy consists of the classical answers for each one of the verifier's questions. Therefore, we can
  have a Pointer QPCP whose proof will be as follows: for the classical part, for each possible question of the verifier, we include the indices of the qudits answered by the quantum provers and the answer of the classical prover. The quantum part of the proof
  will be the shared state before the encoding.
  With this information, the verifier of
  the Pointer QPCP can simulate the classical prover, the quantum provers, and
  the verifier of the CRESP game.

  Formally, the verifier of the Pointer QPCP protocol is provided a proof of the
form $y_1...y_m \otimes \ket{\psi}$,
  where $y_i$ can be seen as a pair $(s_i, c_i)$.
  The verifier will do the following.
  \begin{enumerate}
    \item He picks a question $i$ uniformly at random  as the verifier of the game.
    \item He reads the corresponding strategy of the provers, i.e., $(s_i,c_i)$.
    \item He creates the encoding of the qubits that are specified by
	strategy $s_i$.
    \item He simulates the verifier of the game using the encoded qubits as the quantum
      provers' answers and $c_i$ as the classical prover's answer.
    \item He accepts if and only if the verifier of the game accepts.
  \end{enumerate}

  In our construction, we crucially use the fact that each quantum prover
  has the same strategy, as otherwise, the QPCP verifier would need to read out the strategies
  of each prover, which would require \bigomega{\fn{\log^2}{n}}
bits of information.
  Note that we only read out $k$ qubits from the quantum part of the proof.
  We are left to prove completeness and soundness.

  For completeness, it is not hard to see that if there is a strategy for the
  provers in the game with acceptance probability $p$ then there is a
  Pointer QPCP that accepts with probability $p$ as well, just by
  providing the values of $s_i$, $c_i$, and $\ket{\psi}$ that lead to acceptance
  with probability $p$ in the game.

    For soundness, if there are values of $y_i = (s_i,
    c_i)$ and $\ket{\psi}$ that make the Pointer QPCP verifier accept with
    some probability then these values can be translated to strategies of the provers
    in the CRESP game that will achieve the same acceptance probability.
\end{proof}


\section{Discussions and Open Problems}
\label{sec:open problems}

We defined a new variant of quantum proof systems, the Pointer QPCPs, and provided three equivalent versions of the Pointer QPCP conjecture. Our conjecture is weaker than the original QPCP conjecture and hence may be easier to prove. Moreover, the fact of having an equivalent game formulation might lead to new techniques for resolving the conjecture.

It is an interesting question to see whether we can define a more natural game which is equivalent to the Pointer QPCP conjecture. For our equivalence, we were forced to impose stringent constraints on the game. Nevertheless, it seems that if we allow the quantum provers to
either share some more general entangled state or apply any operator to the state
they share other than swapping, then it is not clear how not to lose the constant gap when constructing the witness \cite{FitzsimonsV15, Ji15} or not to increase the question
size to exponential \cite{NatarajanV15}.
To illustrate this problem, imagine that the provers are
allowed to slightly change the states they return depending on the
questions that were asked.
If the amount of change, in the trace distance, is in the order of \fn{o}{1} then
the verifier will not be able to detect this with constant
probability.\footnote{Let the amount of change be in the order of
\fn{\Theta}{1/\sqrt{n}}, for example.}
When we try to prove soundness by constructing a state with low energy
(as in \cite{FitzsimonsV15}) these errors add up \bigomega{n} times.
The final error can then be too big so we don't know whether the
constructed state is a yes or no instance.

Even with our constraints, CRESP games remain equivalent to Pointer QPCPs.
Since Pointer QPCPs are a superclass of QPCPs, finding a game that is equivalent
to the original QPCP would potentially impose even further constraints.
One of the main problems going from a game back to Local Hamiltonians or QPCPs, is that to simulate the game, the strategies of the provers must be somehow simulated and when we try to do this with Local Hamiltonians, the gap vanishes, while for QPCPs, we require the classical pointer queries.
Note that, in the Set Local Hamiltonian problem the gap doesn't
depend on the size of the sets, by definition.
Whereas, if we want to go to the usual Local Hamiltonian problem then the
absolute gap is divided by the total number of Hamiltonians and so the
gap vanishes.


\section*{Acknowledgements}

We thank Thomas Vidick for helpful discussions. Supported by ERC QCC, ANR RDAM.

\bibliographystyle{halpha}
\bibliography{references}

\appendix

\section{Proof of Lemma~\ref{lemma:mixed answer}}
\label{ap:mixed answer}

\begin{proofof}{lemma:mixed answer}
  We remind here the notation  from \cref{sec:cresp}.
  The encoding of qubit $k$ can be split into two parts: a GHZ-like
  state and copies of $\ket{0}$.
  (For the formal definition see \cref{eq:encoding on zero,eq:encoding on one}.)
  For the encoding of qubit $k$, let $Q_k$ be the subset of provers that receive a share of the GHZ-like state
  and $\calS_k \cong \complex^{4^{|Q_k|}}$ its corresponding subspace.
  The projection over $\calS_k$ onto the codespace of the encoding of qubit
  $k$ is
  \[ \Pi_k = \frac{1}{2} \br{\sum_{u,v
  \in \{0,1\}} \ket{u^{|Q_k|}}\bra{v^{|Q_k|}} + \sum_{w,z \in
  \{2,3\}} \ket{w^{|Q_k|}}\bra{z^{|Q_k|}}}.  \]

  \begin{figure}[t b]
   \centering
   \begin{subfigure}[t]{0.3\textwidth}
    \begin{center}
      \begin{tikzpicture}
    \matrix[table,ampersand replacement=\&] (matq11)
    {
      |[fill=red]|\\
      |[fill=red]|\\
      |[fill=red]|\\
                  \\
                  \\
    };

    \matrix[table,right=1ex of matq11, ampersand replacement=\&] (matq12)
    {
      |[fill=red]|\\
      |[fill=red]|\\
      |[fill=red]|\\
      |[fill=red]|\\
                  \\
    };

    \SlTextNormal{q11-1-1}{$i$}
    \SlTextNormal{q12-1-1}{$j$}
    \CellText{q11-1-1}{prover $1$};
    \CellText{q11-2-1}{prover $2$};
    \CellText{q11-3-1}{prover $3$};
    \CellText{q11-4-1}{prover $4$};
    \CellText{q11-5-1}{prover $5$};
    \end{tikzpicture}
    \end{center}
   \caption{$Q_i \subset Q_j$}
   \label{fig:subset}
 \end{subfigure}
 \begin{subfigure}[t]{0.3\textwidth}
    \begin{center}
      \begin{tikzpicture}
    \matrix[table,ampersand replacement=\&] (matq21)
    {
      |[fill=red]|\\
      |[fill=red]|\\
      |[fill=red]|\\
                  \\
                  \\
    };

    \matrix[table,right=1ex of matq11, ampersand replacement=\&] (matq22)
    {
                  \\
      |[fill=red]|\\
      |[fill=red]|\\
                  \\
                  \\
    };

    \SlTextNormal{q21-1-1}{$i$}
    \SlTextNormal{q22-1-1}{$j$}
    \CellText{q21-1-1}{prover $1$};
    \CellText{q21-2-1}{prover $2$};
    \CellText{q21-3-1}{prover $3$};
    \CellText{q21-4-1}{prover $4$};
    \CellText{q21-5-1}{prover $5$};
    \end{tikzpicture}
    \end{center}
    \caption{$Q_j \subset Q_i$}
    \label{fig:not subset}
  \end{subfigure}
 \begin{subfigure}[t]{0.3\textwidth}
    \begin{center}
      \begin{tikzpicture}
    \matrix[table,ampersand replacement=\&] (matq31)
    {
      |[fill=red]|\\
      |[fill=red]|\\
      |[fill=red]|\\
                  \\
                  \\
    };

    \matrix[table,right=1ex of matq31, ampersand replacement=\&] (matq32)
    {
      \\
      |[fill=red]|\\
      |[fill=red]|\\
      |[fill=red]|\\
                  \\
    };

    \SlTextNormal{q31-1-1}{$i$}
    \SlTextNormal{q32-1-1}{$j$}
    \CellText{q31-1-1}{prover $1$};
    \CellText{q31-2-1}{prover $2$};
    \CellText{q31-3-1}{prover $3$};
    \CellText{q31-4-1}{prover $4$};
    \CellText{q31-5-1}{prover $5$};
    \end{tikzpicture}
    \end{center}
    \caption{$Q_i \not\subset Q_j$ and $Q_j \not\subset Q_i$}
    \label{fig:disjoint}
  \end{subfigure}
  \caption{Examples of the cases for the expected vs.\ received encodings.
	The red cells correspond to shares
	of GHZ-like states and the white cells correspond to copies of
	$\ket{0}$.}
  \end{figure}
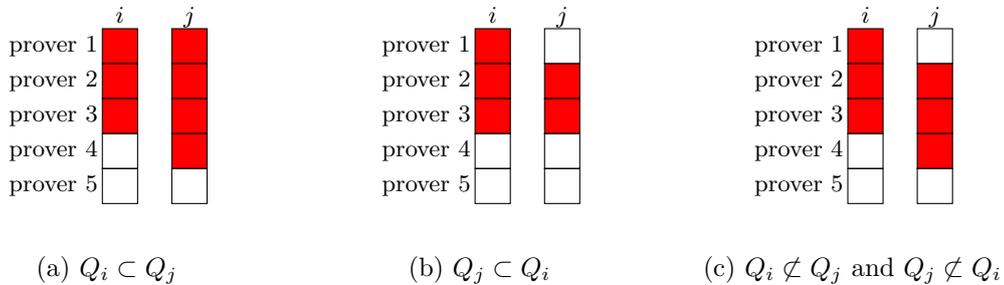

Let $i$ be the qubit whose encoding was asked. Since the provers are dishonest
  and follow the same strategy, they all answer with the encoding of a qubit $j
  \ne i$.

  We split the argument into two cases.
  If $Q_i \subset Q_j$ then the qudits that should contain the GHZ-like state on
  the encoding of $i$ contain the reduced state of a bigger GHZ-like state
  and the density matrix of this reduced state is of the form
  \[\rho_1 =
    \frac{|\alpha|^2}{2} \br{\kb{0^{|Q_i|}} + \kb{1^{|Q_i|}}}
    + \frac{|\beta|^2}{2} \br{\kb{2^{|Q_i|}} + \kb{3^{|Q_i}}}
\]
  for some $\alpha$ and $\beta$, depending on the $j$-th qubit, with
  $|\alpha|^2 + |\beta|^2 = 1$. This case is depicted in \cref{fig:subset}.
  The probability that this state projects onto
  $\Pi_i$ is
  \begin{align*}
       \trace{\Pi_i \rho_1}
       &= \frac{|\alpha|^2}{2} \br{\trace{\Pi_i\kb{0^{|Q_i|}}} +
  \trace{\Pi_i\kb{1^{|Q_i|}}}} \\
  & \qquad {} + \frac{|\beta|^2}{2} \br{\trace{\Pi_i\kb{2^{|Q_i|}}} + \trace{\Pi_i\kb{3^{|Q_i|}}}}\\
       &=
     \frac{|\alpha|^2}{2} \br{\half + \half}
   + \frac{|\beta|^2}{2} \br{\half + \half} \\
       &= \half
   \end{align*}
   where the second equality comes from the fact that  $\norm{\Pi_i\ket{i^{|Q_i|}}}^2 =
   \half$ for $i \in \{0,1,2,3\}$.

  If $Q_i \not\subset Q_j$ then the set $Q_i \setminus Q_j$ is non-empty.
  Let $|Q_i \setminus Q_j| = r$ and $|Q_i \cap Q_j| = s$.
  The provers in  $Q_i \setminus Q_j$ answer $\ket{0}$ and the remaining provers
  send either a GHZ-like state (when $Q_j \subset Q_i$)  or a
  reduced state of a GHZ-like state (when $Q_j \not\subset Q_i$). These
  cases are depicted in \cref{fig:not subset,fig:disjoint}.
  Therefore, the answer from the provers has the form (up to some permutation of
  the qubits, but since the projection is symmetric, we can consider any
  arbitrary order)
  \[\rho_2 =
    \kb{0^r} \otimes \sigma
\]
   for some density matrix $\sigma$ on $s$ qudits.

   From the structure of $\Pi_i$ and using that $|Q_i| = r+s$, we have that
   $\norm{\Pi_i\ket{0^r}\otimes\ket{u^s}} = 0$ for $u \in \{1,2,3\}$. Let
   $\sigma_0 \eqdef \kb{0^s} \sigma \kb{0^s}$ be the projection of $\sigma$ onto
   $\ketbra{0^s}$. It
   follows that
 the probability that the answered state projects
  onto $\Pi_i$ is
     \begin{align*}
       \trace{\Pi_i \rho_2} &=
       \trace{\Pi_i\kb{0^r}\otimes \sigma} \\
       &= \trace{\Pi_i\kb{0^r}\otimes \sigma_0} \\
       &\leq\trace{ \Pi_i \kb{0^{r+s}}} \\
       &= \half.
   \end{align*}
   Therefore, the answered state projects onto the codespace with probability at
   most $1/2$.
 \end{proofof}
\end{document}